\documentclass[10pt]{article}
\usepackage[english]{babel}										
\usepackage[utf8]{inputenc}										
\usepackage[T1]{fontenc}	
\usepackage{amsmath,amsfonts,amssymb,amsthm,cancel,siunitx,
calculator,calc,mathtools,empheq,latexsym}
\usepackage{subfig,epsfig,tikz,float}	
\usepackage{amscd}
\usepackage{booktabs,multicol,multirow,tabularx,array}          

\usepackage{enumerate}
\usepackage{amsmath, amsthm, amssymb}
\usepackage{braket}
\usepackage{mathrsfs}
\usepackage{rsfso}
\theoremstyle{plain}
\newtheorem{theorem}{Theorem}[section]
\newtheorem{proposition}[theorem]{Proposition}
\newtheorem{lemma}[theorem]{Lemma}
\newtheorem{corollary}[theorem]{Corollary}

\theoremstyle{definition}
\newtheorem{definition}[theorem]{Definition}

\newtheorem{remark}[theorem]{Remark}
\usepackage{upgreek}
\usepackage{tikz}
\usetikzlibrary{positioning, braids, decorations.markings}
\definecolor{myblue}{RGB}{80,80,160}
\definecolor{mygreen}{RGB}{50,140,50}
\definecolor{myorange}{RGB}{255,128,0}

\usepackage{hyperref}
\hypersetup{
    colorlinks,
    citecolor=blue,
    filecolor=black,
    linkcolor=black,
    urlcolor=black}
\usepackage{enumitem}
\theoremstyle{definition}

 %

\setlength{\parindent}{0pt}
\setlength{\parskip}{5pt}
\textwidth 13.5cm
\textheight 19.5cm
\columnsep .5cm
\title{\normalsize\bf%
\uppercase{Quantum Barcodes: Persistent Homology for Quantum Phase Transitions}
}
\author{Khyathi Komalan
}


\begin{document}

\date{April 14, 2025}

\maketitle

\vspace{-0.5cm}

\bigskip
\noindent
{\small{\bf ABSTRACT.} We introduce "quantum barcodes," a theoretical framework that applies persistent homology to classify topological phases in quantum many-body systems. By mapping quantum states to classical data points through strategic observable measurements, we create a "quantum state cloud" analyzable via persistent homology techniques. Our framework establishes that quantum systems in the same topological phase exhibit consistent barcode representations with shared persistent homology groups over characteristic intervals. We prove that quantum phase transitions manifest as significant changes in these persistent homology features, detectable through discontinuities in the persistent Dirac operator spectrum. Using the SSH model as a demonstrative example, we show how our approach successfully identifies the topological phase transition and distinguishes between trivial and topological phases. While primarily developed for symmetry-protected topological phases, our framework provides a mathematical connection between persistent homology and quantum topology, offering new methods for phase classification that complement traditional invariant-based approaches.

}

\baselineskip=\normalbaselineskip

\section{Introduction}

Topological quantum phases of matter have historically presented numerous challenges, ranging from theoretical characterization, like the infamous intrinsic sign problem \cite{Ghrist2008} to experimental detection \cite{Bradlyn_2017}. As a relatively new member to the fundamental states of quantum matter, they are beyond the scope of the Landau-Ginzburg-Wilson framework \cite{lifshitz2013statistical}, as topological phases are distinguished by global properties that are invariant under local computations and therefore cannot be detected by local measurements. This limitation has led to a growing intrigue behind the study of topological quantum phases of matter, searching for tools that can help classify and understand them. 

Starting from Landau's theory, which is largely only applicable to second-order transitions \cite{mnyukh2011secondorderphasetransitionsl}, the classification of quantum phases has a rich history. The subsequent discovery of the quantum Hall effect and following theoretical developments \cite{PhysRevLett.49.405} which revealed the existence of phases characterized by topological invariants led to an emergence in the classification of symmetry-protected topological phases \cite{Chen_2013}\cite{Senthil_2015}. Despite this, a unifying mathematical framework, and other abstract methods to classify topological phases of matter are extremely limited.

Topological data analysis (TDA), in particular, persistent homology, offers some tools that could offer novel approaches to these challenges. TDA focuses on studying the shape where data is embedded, and homology studies information about $k-$dimensional holes in a shape. On a similar line, persistent homology, considers the homology over a range of simplicial complexes. Since an approximation of the shape of data can be calculated with the use of simplicial complexes, persistent homology ends up being a particularly useful branch of TDA \cite{gowdridge2022topologicaldataanalysisstructural}.

Recently, quantum theory has been integrated with persistent homology, particularly as seen in Amenyro et al's ``Quantum Persistent Homology" \cite{ameneyro2022quantumpersistenthomology}, which introduces the idea of improving algorithms by including the ability to compute persistent topological features. Another example is the work of Hayakawa, who developed quantum algorithms for computing persistent Betti numbers \cite{Hayakawa_2022}. 

Inspired by Ghrist's barcodes \cite{Ghrist2008}, which provides a visual summary of the persistent homology of a filtered space, by showing how homology classes appear and disappear across parameter values. In this paper, we use the theoretical barcode representation and interpretations to find quantum counterparts, after defining our own ``quantum state cloud". Key contributions of this paper include:

\begin{enumerate}
    \item Quantum State Topology: We define a mathematical structure for mapping quantum states to classical data points, creating a ``quantum state cloud" that can be analyzed using persistent homology.
    \item Quantum Barcode Classification: We establish that, by our definitions, two quantum systems belong to the same topological phase if and only if their barcode representations have the same persistent homology groups for sufficiently large persistence intervals.
    \item Phase Transition Detection: We show that quantum phase transitions correspond to significant changes in the persistent homology of quantum state clouds, providing a new method for detecting topological phase transitions.
    \item Spectral Characterization: We prove that quantum phase transitions are characterized by discontinuous changes in the spectrum of the persistent Dirac operator, establishing a connection between spectral theory and topological phases.
\end{enumerate}

These results are largely inspired by the robust mathematical theory behind persistent homology and barcodes, including work done in bridging quantum theory and persistent homology. 

While our framework primarily addresses symmetry-protected topological phases and requires careful selection of observables for the quantum-to-classical mapping, it establishes a mathematical foundation that connects persistent homology techniques with quantum topological phenomena, offering new perspectives on phase transitions that complement traditional classification methods.

\section{Preliminaries}

In this section, we review the necessary background from persistent homology and quantum topology. 

\begin{definition}[Persistent Homology \cite{Ghrist2008}]

For $i < j$, the $(i, j)$-persistent homology of a persistence complex $C$, denoted $H^{i\to j}_*(C)$ is defined to be the image of the induced homomorphism $x_*: H_*(C^i_*) \to H_*(C^j_*)$.
    
\end{definition}

Persistent homology can be represented through barcodes, which can be represented and interpreted as follows. 

\begin{theorem}[Barcode Representation \cite{Ghrist2008}]
For a finite persistence module $C$ with field $F$ coefficients,
\begin{equation*}
H_*(C; F) \cong \bigoplus_i x^{t_i} \cdot F[x] \oplus \left(\bigoplus_j x^{r_j} \cdot (F[x]/(x^{s_j} \cdot F[x]))\right)
\end{equation*}
\end{theorem}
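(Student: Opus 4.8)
The plan is to recognize this statement as the graded version of the structure theorem for finitely generated modules over a principal ideal domain, with the PID being the polynomial ring $F[x]$. First I would set up the algebraic translation: a persistence complex $C$ with coefficients in $F$ assigns to each index $i$ a vector space together with the structure maps $C^i \to C^{i+1}$, and assembling $\bigoplus_i H_*(C^i_*)$ into a single graded object on which $x$ acts as the shift induced by those structure maps turns $H_*(C;F)$ into a graded module over $F[x]$. The ring $F[x]$ is a graded PID whose only homogeneous ideals are the $(x^n)$, and this last fact is what will force the specific form of the summands below.

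Second, I would verify that the finiteness hypothesis does what we need. Because $C$ is a finite persistence complex, each homology group $H_*(C^i_*)$ is finite dimensional over $F$ and the structure maps are eventually isomorphisms, so $H_*(C;F)$ is a finitely generated graded $F[x]$-module (equivalently, of finite type in the sense used by Zomorodian and Carlsson). This step is what licenses applying a classification theorem at all, and it is the place where the word ``finite'' in the hypothesis is actually consumed.

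Third, and this is the technical heart, I would invoke (or reprove) the graded structure theorem: every finitely generated graded module $M$ over a graded PID decomposes as a direct sum of a free graded module and cyclic torsion modules $R/(d_j)$ with each $d_j$ homogeneous, every summand allowed a grading shift. The cleanest route is to take a finite graded presentation $R^a \xrightarrow{\Phi} R^b \to M \to 0$ with $\Phi$ a matrix of homogeneous entries and put $\Phi$ into Smith normal form using only row and column operations that preserve homogeneity; the key observation is that the Euclidean reductions can be carried out homogeneously because every nonzero homogeneous element of $F[x]$ is a unit times a power of $x$, so the resulting diagonal entries are themselves powers of $x$. That is precisely why the torsion summands appear as $F[x]/(x^{s_j}\cdot F[x])$ rather than as quotients by more general polynomials. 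I expect the main obstacle to be bookkeeping rather than conceptual: one must track the grading shifts through the reduction so that the free generators land in degrees $t_i$ and the torsion generators in degrees $r_j$.

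Finally, I would package the output in the stated form: the free summands $x^{t_i}\cdot F[x]$ are the homology classes born at filtration level $t_i$ that persist forever, and the torsion summands $x^{r_j}\cdot\bigl(F[x]/(x^{s_j}\cdot F[x])\bigr)$ are the classes born at $r_j$ and dying at $r_j + s_j$; reading off the pairs $(t_i,\infty)$ and the intervals $[r_j, r_j + s_j)$ recovers the barcode, so the displayed isomorphism is exactly the algebraic content of the barcode representation. The only verification genuinely specific to our setting, as opposed to a citation of the classical result, is that the persistence complexes arising from the quantum state cloud are finite in the sense required in the second step, which will follow from the finiteness of the underlying observable measurement data.
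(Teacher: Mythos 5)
Your proposal is correct and follows exactly the standard Zomorodian--Carlsson argument that the cited source \cite{Ghrist2008} relies on: reinterpret the persistence module as a finitely generated graded $F[x]$-module, apply the structure theorem over the graded PID $F[x]$ (whose homogeneous ideals are precisely the $(x^n)$, forcing torsion summands of the form $F[x]/(x^{s_j}\cdot F[x])$), and track grading shifts to recover birth times. The paper itself states this theorem as a cited preliminary without proof, so there is no alternative in-paper argument to compare against; your reconstruction is the intended one.
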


\begin{theorem}[Barcode Interpretation \cite{Ghrist2008}]
The rank of the persistent homology group $H^{i\to j}_k(C; F)$ is equal to the number of intervals in the barcode of $H_k(C; F)$ spanning the parameter interval $[i, j]$.
\end{theorem}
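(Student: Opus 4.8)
The plan is to deduce this directly from the Barcode Representation theorem, reading off each summand of $H_k(C;F)$ as a barcode interval and tracking how the multiplication-by-$x$ action behaves on it. First I would recall that the structure theorem gives a graded $F[x]$-module isomorphism
\[
H_k(C;F) \;\cong\; \bigoplus_i x^{t_i}\cdot F[x] \;\oplus\; \bigoplus_j x^{r_j}\cdot\big(F[x]/(x^{s_j}\cdot F[x])\big),
\]
where the free summand $x^{t_i}\cdot F[x]$ encodes a half-infinite bar $[t_i,\infty)$ and the torsion summand $x^{r_j}\cdot(F[x]/(x^{s_j}))$ encodes a finite bar $[r_j,\,r_j+s_j)$. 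Under this identification $H_k(C^\ell_*)$ is the degree-$\ell$ graded piece, and the map $x_*\colon H_k(C^i_*)\to H_k(C^j_*)$ appearing in the definition of $(i,j)$-persistent homology is exactly multiplication by $x^{\,j-i}$ restricted to degree $i$. Because the isomorphism above is one of graded modules, this map respects the direct-sum decomposition, so $\operatorname{rank} H^{i\to j}_k(C;F)$ is the sum of the ranks of the restrictions to the individual summands.

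Next I would analyze the two types of summand separately. For a free summand $x^{t}\cdot F[x]$, the degree-$i$ piece is one-dimensional when $i\ge t$ and zero otherwise; when $i\ge t$ (hence $j\ge t$ as well, since $j\ge i$) multiplication by $x^{j-i}$ carries it isomorphically onto the one-dimensional degree-$j$ piece. Thus the summand contributes $1$ to the rank precisely when $t\le i$, i.e. precisely when its bar $[t,\infty)$ contains $[i,j]$. For a torsion summand $x^{r}\cdot(F[x]/(x^{s}))$, the degree-$\ell$ piece is one-dimensional for $r\le \ell\le r+s-1$ and zero otherwise; multiplication by $x^{j-i}$ sends a generator in degree $i$ to an element in degree $j$ that is nonzero exactly when $i\ge r$ and $j\le r+s-1$. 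Hence this summand contributes $1$ to the rank precisely when $r\le i$ and $j\le r+s-1$, i.e. precisely when its bar $[r,\,r+s)$ contains $[i,j]$.

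Summing these contributions over all summands, $\operatorname{rank} H^{i\to j}_k(C;F)$ equals the number of bars in the barcode of $H_k(C;F)$ that span the full interval $[i,j]$, which is the claim. I expect the only real care to be bookkeeping rather than conceptual: one must fix the half-open convention for finite bars consistently with the exponents $r_j,s_j$ in the structure theorem, and one must be sure the structure-theorem isomorphism is genuinely graded, so that it intertwines the $x$-actions and restricts correctly to each $H_k(C^\ell_*)$ — it is precisely this naturality that licenses the summand-by-summand rank computation. A secondary point worth stating explicitly is that $F[x]$ is a PID, which is what makes the decomposition available and guarantees each graded piece is a finite-dimensional $F$-vector space, so the ranks involved are finite and additive over the direct sum.
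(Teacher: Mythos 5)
The paper does not prove this statement: it is quoted verbatim from Ghrist's survey as background (Theorem 2.3 in the Preliminaries), so there is no in-paper argument to compare against. Your proposal is the standard proof from the literature (the Zomorodian--Carlsson structure-theorem argument that Ghrist himself sketches), and it is correct: the graded isomorphism intertwines the $x$-action, so the map $x_*^{\,j-i}$ splits over the summands, the free summand $x^{t}\cdot F[x]$ contributes $1$ exactly when $t\le i$, and the torsion summand contributes $1$ exactly when $r\le i$ and $j\le r+s-1$, matching the half-open bars $[t,\infty)$ and $[r,r+s)$. You also correctly flag the two points that need care, namely that the isomorphism is one of graded modules and that the half-open endpoint convention must agree with the exponents in the structure theorem; nothing further is missing.
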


\begin{theorem}[Stability of Persistence Diagrams \cite{inproceedings}]
Let $X$ and $Y$ be two finite metric spaces. Then the bottleneck distance \cite{Kislev_2021} between their persistence diagrams is bounded by the Gromov-Hausdorff distance between $X$ and $Y$.  
\end{theorem}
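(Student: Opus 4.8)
The plan is to deduce this geometric statement from the \emph{algebraic stability theorem} for persistence modules: if two pointwise finite–dimensional persistence modules are $\delta$-interleaved, then the bottleneck distance between their persistence diagrams is at most $\delta$. To each finite metric space $Z$ attach its Vietoris--Rips filtration $VR_\bullet(Z)$ — in the radius convention, $VR_r(Z)$ is the simplicial complex whose simplices are the subsets $\sigma\subseteq Z$ with $d_Z(z,z')\le 2r$ for all $z,z'\in\sigma$ — and the persistence module $\mathbb{V}_k(Z)=H_k(VR_\bullet(Z);F)$. Since $Z$ is finite only finitely many scales matter, so $\mathbb{V}_k(Z)$ is a finite persistence module and its diagram $\mathrm{dgm}(\mathbb{V}_k(Z))$ is well defined by the Barcode Representation theorem above. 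With this convention the target inequality reads $d_B(\mathrm{dgm}(\mathbb{V}_k(X)),\mathrm{dgm}(\mathbb{V}_k(Y)))\le d_{GH}(X,Y)$, and it suffices to show that every correspondence $R\subseteq X\times Y$ with distortion $\mathrm{dis}(R)=\sup\{\,|d_X(x,x')-d_Y(y,y')| : (x,y),(x',y')\in R\,\}\le 2\eta$ induces an $\eta$-interleaving of $\mathbb{V}_k(X)$ and $\mathbb{V}_k(Y)$; taking the infimum over $R$ then yields the bound, because $d_{GH}(X,Y)=\tfrac12\inf_R\mathrm{dis}(R)$.

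First I would build the interleaving maps at the simplicial level. From $R$, choose maps $\varphi:X\to Y$ and $\psi:Y\to X$ whose graphs lie in $R$ (possible since $R$ is a correspondence). The distortion bound gives the crucial estimate $d_Y(\varphi(x),\varphi(x'))\le d_X(x,x')+2\eta$, so if $\sigma$ is a simplex of $VR_r(X)$ then $\varphi(\sigma)$ is a simplex of $VR_{r+\eta}(Y)$; thus $\varphi$ induces simplicial maps $VR_r(X)\to VR_{r+\eta}(Y)$ for all $r$, compatible with the filtration inclusions, and symmetrically for $\psi$. Passing to $H_k$ yields morphisms of persistence modules $\Phi:\mathbb{V}_k(X)\to\mathbb{V}_k(Y)[\eta]$ and $\Psi:\mathbb{V}_k(Y)\to\mathbb{V}_k(X)[\eta]$.

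It remains to verify the two interleaving identities, i.e. that $\Psi\circ\Phi$ and $\Phi\circ\Psi$ equal the internal shift maps of $\mathbb{V}_k(X)$ and $\mathbb{V}_k(Y)$ by $2\eta$. This is where the argument has content: for any simplex $\sigma\in VR_r(X)$ and vertex $x\in\sigma$ one has $d_X(x,\psi\varphi(x))\le 2\eta$, since $(x,\varphi(x))$ and $(\psi\varphi(x),\varphi(x))$ both lie in $R$, from which $\sigma\cup\psi\varphi(\sigma)$ is a simplex of $VR_{r+2\eta}(X)$; hence the simplicial maps $\psi\varphi$ and the inclusion $VR_r(X)\hookrightarrow VR_{r+2\eta}(X)$ are contiguous and therefore induce the same map on homology, and likewise on the $Y$ side. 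This establishes the $\eta$-interleaving; the algebraic stability theorem gives $d_B\le\eta$ for that correspondence, and the infimum over correspondences gives $d_B(\mathrm{dgm}(\mathbb{V}_k(X)),\mathrm{dgm}(\mathbb{V}_k(Y)))\le d_{GH}(X,Y)$. (If one uses the diameter convention for $VR_\bullet$ instead, the same computation produces a $2\eta$-interleaving and hence the weaker $d_B\le 2\,d_{GH}$, which still proves the stated assertion up to a universal constant.)

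The main obstacle I anticipate is the contiguity step: converting the metric distortion hypothesis into the assertion that $\psi\varphi$ is \emph{contiguous} to — not merely homotopic to — the filtration inclusion, which requires the explicit diameter estimate on $\sigma\cup\psi\varphi(\sigma)$ and careful bookkeeping of the scale shifts ($\eta$ versus $2\eta$) through the construction of $\Phi,\Psi$ and their composites. Once contiguity is secured, the remainder is a formal consequence of functoriality of simplicial homology together with the isometry/stability theorem for interleavings, the finiteness of $X$ and $Y$ removing any tameness concerns.
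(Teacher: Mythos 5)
The paper does not prove this statement at all: it appears in the Preliminaries as a cited background result (attributed to Cohen--Steiner, Edelsbrunner, and Harer), so there is no in-paper argument to compare against. Your proposal supplies the standard modern proof of the Gromov--Hausdorff stability of Vietoris--Rips persistence, and it is correct: reduce to the algebraic stability (interleaving) theorem, show that any correspondence $R$ with $\mathrm{dis}(R)\le 2\eta$ induces an $\eta$-interleaving of the Rips persistence modules via maps $\varphi,\psi$ whose graphs lie in $R$, and verify the interleaving identities by the contiguity of $\psi\varphi$ with the filtration inclusion using the diameter estimate on $\sigma\cup\psi\varphi(\sigma)$. Your bookkeeping of the radius-versus-diameter convention (and the resulting factor of $2$) is exactly the point where such arguments usually go wrong, and you handle it correctly. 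Two remarks worth recording: first, you rightly had to \emph{interpret} the statement, since the paper never says which persistence diagrams a finite metric space is supposed to have --- your Vietoris--Rips reading is the one consistent with the paper's later use in its own stability theorem for quantum state clouds; second, the result you actually proved is the Chazal et al.\ Gromov--Hausdorff stability theorem rather than the function-level stability theorem of the cited reference, which is arguably the attribution the paper should have used in the first place.
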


Recent work involves connections between quantum mechanics and topological data analysis. A relevant tool to our framework is the persistent dirac operator:

\begin{definition}[Persistent Dirac Operator \cite{ameneyro2022quantumpersistenthomology}]
The $k$-th persistent Dirac operator is defined as the Hermitian operator
\begin{equation*}
B^{\varepsilon,\varepsilon'}_k = \begin{pmatrix} 
0 & \partial^{\varepsilon,\varepsilon'}_k & 0 \\
\partial^{\varepsilon,\dagger}_k & 0 & \partial^{\varepsilon,\varepsilon'}_{k+1} \\
0 & \partial^{\varepsilon,\varepsilon'\dagger}_{k+1} & 0
\end{pmatrix} 
- \xi
\begin{pmatrix}
P^{\varepsilon}_{k-1} & 0 & 0 \\
0 & -P^{\varepsilon}_k & 0 \\
0 & 0 & P^{\varepsilon'}_{k+1}
\end{pmatrix}
\end{equation*}
\end{definition}

As shown in Amenyro et al's ``Quantum Persistent Homology", this tool allows for the extraction of persistent betti numbers.

\begin{theorem}[Quantum Extraction of Persistent Betti Numbers \cite{ameneyro2022quantumpersistenthomology}]
The probability distribution obtained from quantum phase estimation applied to the persistent Dirac operator is
\begin{equation*}
\mathcal{P}(p) = \frac{1}{N}\sum_{\lambda}g_{\lambda}(p), \quad g_{\lambda}(p) = \frac{1}{M^2}\frac{\sin^2\pi l\lambda}{\sin^2\frac{\pi(l\lambda-p)}{M}}
\end{equation*}
where $\lambda$ are eigenvalues of $B^{\varepsilon,\varepsilon'}_k$. The persistent Betti number $\beta^{\varepsilon,\varepsilon'}_k$ equals $g_\xi(p)|_{p\approx l\xi}$.
\end{theorem}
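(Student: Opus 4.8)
The plan is to reduce the statement to two independent ingredients: the textbook analysis of quantum phase estimation (QPE) for a Hermitian operator, and the structural fact that the persistent Dirac operator $B^{\varepsilon,\varepsilon'}_k$ records the persistent Betti number $\beta^{\varepsilon,\varepsilon'}_k$ as the multiplicity of a distinguished eigenvalue. First I would fix $t$ ancilla qubits, set $M=2^{t}$, and run QPE on the unitary $U=e^{2\pi i\, l\, B^{\varepsilon,\varepsilon'}_k}$, where $l$ is the chosen rescaling that places the relevant part of the spectrum in $[0,1)$, with the work register prepared in the maximally mixed state $\rho=I/N$ on the $N$-dimensional space on which $B^{\varepsilon,\varepsilon'}_k$ acts. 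Since $B^{\varepsilon,\varepsilon'}_k$ is Hermitian it has an orthonormal eigenbasis $\{\ket{\lambda}\}$, and $\rho=\tfrac1N\sum_\lambda \ket{\lambda}\!\bra{\lambda}$ is the uniform incoherent mixture over that basis, which I would realize concretely by purification (entangling each $\ket{\lambda}$ with a fresh ancilla) so that the measured statistics are exactly this average.

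For a single eigenstate $\ket{\lambda}$ the usual QPE computation applies: after the controlled powers of $U$ and the inverse quantum Fourier transform, the amplitude for outcome $p\in\{0,\dots,M-1\}$ is the geometric sum $\tfrac1M\sum_{x=0}^{M-1}e^{2\pi i x(l\lambda-p)/M}$, whose squared modulus is $\tfrac{1}{M^{2}}\,\dfrac{\sin^{2}\bigl(\pi(l\lambda-p)\bigr)}{\sin^{2}\bigl(\pi(l\lambda-p)/M\bigr)}$. Because $p$ is an integer, $\sin^{2}\bigl(\pi(l\lambda-p)\bigr)=\sin^{2}(\pi l\lambda)$, so this equals $g_\lambda(p)$. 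Linearity of the measurement probabilities over the incoherent mixture then gives $\mathcal P(p)=\tfrac1N\sum_\lambda g_\lambda(p)$, the sum running over all eigenvalues counted with multiplicity, which is the first assertion.

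For the second assertion I would invoke the persistent Dirac/persistent Laplacian correspondence: the square of the off-diagonal (boundary) part of $B^{\varepsilon,\varepsilon'}_k$ reproduces, grade by grade, the persistent combinatorial Laplacian $\Delta^{\varepsilon,\varepsilon'}_k$, whose harmonic subspace has dimension $\beta^{\varepsilon,\varepsilon'}_k$; the shift term $-\xi\bigl(P^{\varepsilon}_{k-1}\oplus-P^{\varepsilon}_k\oplus P^{\varepsilon'}_{k+1}\bigr)$ in the definition is engineered precisely so that this harmonic subspace becomes the $\xi$-eigenspace of $B^{\varepsilon,\varepsilon'}_k$. Hence $\xi$ occurs as an eigenvalue with multiplicity exactly $\beta^{\varepsilon,\varepsilon'}_k$. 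Evaluating $\mathcal P$ at the bin $p\approx l\xi$, the Dirichlet-type kernel $g_\xi$ is peaked there with value tending to $1$ (exactly $1$ when $l\xi\in\mathbb Z$, reading the $0/0$ as its limit), while every $g_\lambda$ with $\lambda\neq\xi$ contributes only $O(1/M^{2})$ leakage; collecting the $\beta^{\varepsilon,\varepsilon'}_k$ coincident copies of the peak yields $\beta^{\varepsilon,\varepsilon'}_k = g_\xi(p)\big|_{p\approx l\xi}$ once $g_\xi$ is understood to carry the multiplicity of the eigenvalue $\xi$.

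The main obstacle is the bookkeeping in this last step: one must show no other eigenvalue $\lambda$ leaks nonnegligible weight into the bin at $p\approx l\xi$, which requires choosing $t$ (equivalently $M$) large relative to the spectral gap of $B^{\varepsilon,\varepsilon'}_k$ around $\xi$, and one must treat the case $l\xi\in\mathbb Z$ where $g_\xi$ is literally $0/0$ at $p=l\xi$ and must be defined by continuity. A secondary subtlety is justifying that QPE on the maximally mixed input reproduces the uniform eigenvalue average exactly rather than only up to amplitude-estimation error; the purification argument above settles this. Everything else is the standard QPE estimate and is routine.
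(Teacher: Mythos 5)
The paper does not actually prove this statement: it appears in Section 2 as a quoted preliminary, imported verbatim from Ameneyro--Maroulas--Siopsis \cite{ameneyro2022quantumpersistenthomology}, so there is no in-paper proof to compare against. Your reconstruction is nonetheless essentially the argument of that source and is correct in outline: running phase estimation on $e^{2\pi i\,l\,B^{\varepsilon,\varepsilon'}_k}$ with the maximally mixed input gives, for each eigenvector, the standard Dirichlet kernel $\tfrac{1}{M^2}\sin^2\!\bigl(\pi(l\lambda-p)\bigr)/\sin^2\!\bigl(\pi(l\lambda-p)/M\bigr)$, which equals $g_\lambda(p)$ because $p$ is an integer, and incoherent averaging yields $\mathcal P(p)=\tfrac1N\sum_\lambda g_\lambda(p)$. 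You also correctly identify the two points the bare statement glosses over: that the $\xi$-eigenspace of $B^{\varepsilon,\varepsilon'}_k$ has dimension exactly $\beta^{\varepsilon,\varepsilon'}_k$ only because the $-\xi$ shift is engineered to single out the persistent harmonic subspace (this is the content of the paper's own Theorem 5.5), and that the literal equality $\beta^{\varepsilon,\varepsilon'}_k=g_\xi(p)|_{p\approx l\xi}$ only makes sense once $g_\xi$ is read as carrying the multiplicity of $\xi$ and the $0/0$ at $l\xi\in\mathbb Z$ is resolved by continuity, with leakage from other eigenvalues controlled by taking $M$ large relative to the spectral gap. No gap in your argument; if anything, your version is more careful than the statement it is proving.
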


\begin{remark}
    The quantum algorithm for computing persistent Betti numbers has time complexity \( O\left(n^{5/\alpha}\right) \), where \( n \) is the number of data points and \( \alpha \) is the accuracy parameter, offering an exponential speedup over classical algorithms with complexity \( O\left(2^n \log\left(\frac{1}{\alpha}\right)\right) \) \cite{ameneyro2022quantumpersistenthomology}
\end{remark}

For our theoretical framework, it would be useful to note specific connections between homology and quantum mechanics in context of quantum field theory.

\begin{definition}[Homotopy Retract \cite{chiaffrino2024homologicalquantummechanics}]
A homotopy retract from a chain complex $(V,\partial)$ to its homology is defined by projector $p: V \to H_*(V)$, inclusion $i: H_*(V) \to V$, and homotopy $h: V \to V$ satisfying $p \circ i = \text{id}$ and $\{\partial, h\} = \partial \circ h + h \circ \partial = 1 - i \circ p$.    
\end{definition}

\begin{theorem}[Quantum Expectation Value via Cohomology \cite{chiaffrino2024homologicalquantummechanics}]
For a functional $F$ of fields whose quantum expectation value we want to compute, if $F'$ is equal to $F$ in cohomology, i.e., $F' - F = \delta G$ for suitable $G$, and $F' = f \circ p$, then 
\begin{equation*}
f(x,y) = \frac{\langle y;t_f|T(F)|x;t_i\rangle}{\langle y;t_f|x;t_i\rangle}
\end{equation*}
\end{theorem}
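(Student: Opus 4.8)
The plan is to realise the bracket $\langle\cdot\rangle$ as the Batalin--Vilkovisky / path-integral expectation value taken with the boundary data $x$ at $t_i$ and $y$ at $t_f$ held fixed, normalised so that $\langle 1\rangle = 1$. With this normalisation the right-hand side $\langle y;t_f|T(F)|x;t_i\rangle/\langle y;t_f|x;t_i\rangle$ is, by the phase-space (Feynman--Kac) representation of time-ordered correlators, exactly the definition of $\langle F\rangle$; so the real content of the theorem is the identity $\langle F\rangle = f(x,y)$, and I would establish it in two moves using the homotopy retract data $(p,i,h)$ supplied in the preceding definition.

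\textbf{First move: exact functionals are invisible.} The differential $\delta$ on functionals of fields is dual to the free kinetic operator $\partial$ on the complex $(V,\partial)$, and the Schwinger--Dyson (Ward) identity --- the statement that the path-integral measure is $\delta$-closed, equivalently Stokes' theorem in field space, equivalently the BV master relation $\{S,-\}+\hbar\Delta$ annihilating the measure --- gives $\langle \delta G\rangle = 0$ for every $G$. Since $F'-F=\delta G$, this yields $\langle F\rangle = \langle F'\rangle$, so we may replace $F$ by $F'=f\circ p$ everywhere.

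\textbf{Second move: pullbacks from homology localise.} Using the retract, split every field as $\phi = i(p\phi) + (\partial h + h\partial)\phi$ and impose the standard side conditions $ph=0$, $hi=0$, $h^2=0$ (always arrangeable). In the free theory $h$ is the Feynman propagator, and the Gaussian integral over the complementary (image-of-$h$) directions is precisely the free kernel $\langle y;t_f|x;t_i\rangle$ appearing in the denominator. Because $F'=f\circ p$ depends only on $p\phi$, every would-be contraction feeds a factor of $h$ into a $p$-leg and vanishes by $ph=0$; what remains is $f$ evaluated along $i$, i.e.\ on the classical configuration fixed by the boundary data, under the identification of $H_*(V)$ with the space of classical boundary data $(x,y)$. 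Hence $\langle F'\rangle = f(x,y)$, and combining with the first move gives the asserted formula.

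The main obstacle is making the first move rigorous: $\langle\delta G\rangle = 0$ is a claim about the $\delta$-closedness of the path-integral measure, which in finite-dimensional truncations is just Stokes but in the genuine infinite-dimensional setting requires either a regularisation that manifestly preserves the identity or --- cleaner --- adopting the algebraic homological-perturbation-lemma definition of $\langle\cdot\rangle$, under which it becomes the tautology that the transferred expectation depends only on cohomology classes. A secondary subtlety is bookkeeping the time-ordering symbol $T$: it is encoded entirely in the Feynman (rather than retarded or advanced) boundary conditions on $h$, so one must verify that the homotopy appearing in the retract is indeed the Feynman propagator for the equality to hold on the nose.
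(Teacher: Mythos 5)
This statement appears in the paper's Preliminaries as a result quoted verbatim from the cited reference \cite{chiaffrino2024homologicalquantummechanics}; the paper itself offers no proof of it, so there is no in-paper argument to compare yours against. Judged on its own terms, your sketch reproduces the strategy of the original reference: the expectation value is defined (or realised) so that $\delta$-exact functionals are annihilated, which reduces $F$ to $F'=f\circ p$, and the homotopy retract with side conditions $ph=0$, $hi=0$, $h^2=0$ and $h$ equal to the Feynman propagator localises $\langle F'\rangle$ onto the cohomology, identified with the boundary data $(x,y)$ --- this is exactly the mechanism by which the perturbation-lemma projector $P'$ in Theorem 2.10 computes time-ordered correlators. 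You have also correctly identified the two genuine subtleties: the Schwinger--Dyson step $\langle\delta G\rangle=0$ is only tautological if one adopts the algebraic (transferred) definition of the bracket rather than the formal path integral, and the time-ordering symbol $T$ is carried entirely by the Feynman boundary conditions on $h$. What your proposal leaves as a sketch rather than a proof is the second move: the claim that ``every would-be contraction feeds a factor of $h$ into a $p$-leg and vanishes'' needs to be run through the explicit expansion $P'=P\sum_{n\ge 0}(-\eta H)^n$ to confirm that the surviving terms assemble into the normalised matrix element $\langle y;t_f|T(F)|x;t_i\rangle/\langle y;t_f|x;t_i\rangle$ rather than merely something proportional to it; that combinatorial verification is the substance of the proof in the cited work and would need to be supplied to close the argument.
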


\begin{theorem}[Perturbation Lemma for QFT \cite{chiaffrino2024homologicalquantummechanics}]
Given a homotopy retract $(p, i, h)$ from $(F(V), Q_0)$ to $(F(\mathbb{R}^2), 0)$ and a perturbation $\eta = Q_I - i\hbar\Delta$, the perturbed projector is
\begin{equation*}
P' = P \sum_{n\geq 0}(-\eta H)^n
\end{equation*}
which computes quantum expectation values via $f = P'(F)$.
\end{theorem}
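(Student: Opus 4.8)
The plan is to recognize this statement as a specialization of the classical Homological Perturbation Lemma (HPL) to the case in which the target complex carries the zero differential. First I would lay down the purely algebraic skeleton: a homotopy retract $(p,i,h)$ from $(F(V),Q_0)$ to $(F(\mathbb{R}^2),0)$ means $p\circ i=\mathrm{id}$ and $\{Q_0,h\}=1-i\circ p$, and I would additionally impose the side conditions $h^2=0$, $p\circ h=0$, $h\circ i=0$, which can always be arranged by a standard normalization of $h$ and which make the transfer formulas clean. The perturbation $\eta=Q_I-i\hbar\Delta$ deforms the differential to $Q=Q_0+\eta$; for $Q$ to remain a differential one needs $Q^2=0$, i.e.\ $\{Q_0,\eta\}+\eta^2=0$, which is precisely the quantum master equation and is part of the standing assumptions on the interacting theory (the $\hbar\Delta$ term being exactly what corrects the classical BRST/BV relation to a quantum one).

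Next I would write down the HPL output verbatim: the perturbed contraction has projector $p'=p\sum_{n\ge0}(-\eta h)^n$, inclusion $i'=\sum_{n\ge0}(-h\eta)^n i$, homotopy $h'=h\sum_{n\ge0}(-\eta h)^n$, and transferred differential $d'=p\sum_{n\ge0}(-\eta h)^n\,\eta\, i$ on $F(\mathbb{R}^2)$. Because $F(\mathbb{R}^2)$ is by construction the cohomology of the free theory and the interaction is chosen so that the transferred differential vanishes order by order, only $P':=p'$ survives, giving exactly the stated formula $P'=P\sum_{n\ge0}(-\eta H)^n$ with $P=p$, $H=h$. The check that $(p',i',h')$ is again a homotopy retract — $p'i'=\mathrm{id}$ and $\{Q,h'\}=1-i'p'$ — is the usual bookkeeping that follows from resumming the geometric series using the side conditions; I would carry this out only schematically.

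I would then connect $f=P'(F)$ to the physical correlator. By the Quantum Expectation Value via Cohomology theorem, it is enough to exhibit a functional $F'$ equal to $F$ in $Q$-cohomology with $F'=f\circ p$; taking $F'=i'p'(F)$ does the job, since $F'-F=i'p'(F)-F$ is $Q$-exact (this is immediate from $i'p'-\mathrm{id}=\{Q,h'\}$), and $p'(F)=P'(F)=f$ lands in $F(\mathbb{R}^2)$. Hence $f$ is identified with $\langle y;t_f|T(F)|x;t_i\rangle/\langle y;t_f|x;t_i\rangle$, which is the assertion that $P'(F)$ computes the quantum expectation value.

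The main obstacle is making sense of the infinite series $\sum_{n\ge0}(-\eta H)^n$: literally it requires $1+\eta H$ to be invertible in a suitable completed algebra of operators on $F(V)$. I would resolve this by passing to formal power series in $\hbar$ and the coupling constants and equipping $F(V)$ with a grading in which $\eta H$ strictly raises degree — each contraction $\Delta$ carrying a power of $\hbar$ and each vertex from $Q_I$ carrying a power of the coupling — so that in any fixed total degree only finitely many terms of the series contribute and $P'$ is well defined as a formal expansion, which is exactly the regime of perturbative QFT. Pinning down that grading so that nilpotency is genuinely strict, and verifying that $h$ (the propagator) and $\Delta$ interact so that the transferred differential $d'$ really vanishes with no surviving anomaly, is the delicate input on which the rest of the argument rests.
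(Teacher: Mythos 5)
This statement is one of the paper's quoted preliminaries, imported verbatim from the Chiaffrino--Hohm--Pinto reference; the paper itself offers no proof of it, so there is nothing internal to compare your argument against. On its own terms, your reconstruction is the right one: this \emph{is} the classical Homological Perturbation Lemma specialized to a target with zero differential, and the formulas you write for $p'$, $i'$, $h'$, and the transferred differential, together with the passage to formal power series in $\hbar$ and the coupling to make the geometric series $\sum_{n\ge 0}(-\eta H)^n$ well defined degree by degree, are exactly the standard argument (and the side conditions $h^2=0$, $ph=0$, $hi=0$ are, as you say, harmless to impose). Your identification of $f=P'(F)$ with the correlator via $F'=i'p'(F)$ and the exactness of $i'p'-\mathrm{id}=\{Q,h'\}$ is also the intended mechanism linking this lemma to the Quantum Expectation Value via Cohomology theorem.

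The one place where your write-up leans on an assertion rather than a derivation is the vanishing of the transferred differential $d'=p\sum_{n\ge 0}(-\eta h)^n\eta\, i$ on $F(\mathbb{R}^2)$. The HPL by itself does not give this; in general the perturbed retract lands on a complex with a nonzero induced differential, and the statement that $P'$ alone ``computes quantum expectation values'' silently presupposes $d'=0$ (otherwise $P'(F)$ is only a cocycle representative, not canonically an element of the cohomology one wants). You correctly flag this as the delicate input, but to close the argument you would need the degree/ghost-number bookkeeping of the specific BV setup — e.g.\ that $\eta$ and $h$ shift gradings so that $p\,(\cdots)\,\eta\, i$ necessarily lands in zero — or an appeal to the quantum master equation $\{Q_0,\eta\}+\eta^2=0$ combined with the structure of $F(\mathbb{R}^2)$. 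Since the paper treats this theorem as a black box, supplying that one step (or explicitly citing it as a hypothesis on the theory) is all that separates your proposal from a complete proof.
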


In the context of this paper, we are primarily interested in studying topological quantum phases of matter.

\begin{definition}[Quantum Phase]
A quantum phase is an equivalence class of gapped ground states of quantum many-body systems (described by local Hamiltonians), such that any two ground states in the same phase can be connected by a continuous path of gapped local Hamiltonians without closing the energy gap or undergoing a quantum phase transition. 
\end{definition}

\begin{definition}[Quantum Phase Transition]
A quantum phase transition occurs at a critical point $\lambda_{c}$ in the parameter space where the ground state undergoes a qualitative change, which typically results in a non-analyticity in the energy or its derivatives. 
\end{definition}

Topological phases are characterized by global invariants rather than local properties.

\begin{definition}[Topological Phase]
A topological phase is a quantum phase characterized by a non-local topological invariant that remains constant under continuous deformations of the system that do not close the energy gap.  
\end{definition}

In order to be able to use persistent homology, we need to use the appropriate metric space for quantum states.

\begin{definition}[Quantum State Metric Phase \cite{Invernizzi2011QUANTUMED}]
For a quantum system with Hilbert space $H$, we define a metric space $(M, d)$ where $M$ is the set of quantum states (density matrices), and $d$ is a distance function, such as:

\[
d_{\text{tr}}(\rho_1, \rho_2) = \frac{1}{2} \mathrm{Tr} \left| \rho_1 - \rho_2 \right|
\]

or

\[
d_B(\rho_1, \rho_2) = \sqrt{2 - 2\sqrt{F(\rho_1, \rho_2)}},
\]

where $F(\rho_1, \rho_2)$ denotes the fidelity between $\rho_1$ and $\rho_2$.   
\end{definition}

Now that we have recalled the fundamental definitions and results from previous work in topological quantum phases, homological algebra, persistent homology and barcodes, we can create a quantum state topology framework. We will define topological structures over quantum many-body systems and prove results about them. 

\section{Quantum State Topology Framework}

Using fundamentals from quantum state topology, we will develop a formal system for analyzing quantum systems through persistent homology. 

We define a quantum phase parameter space as follows:

\begin{definition}[Quantum Phase Parameter Space]
For a quantum many-body system with Hamiltonian $H(\lambda)$ depending on parameter $\lambda$, the quantum phase parameter space P is the set of parameter values over which we build a filtration of the system's state space.   
\end{definition}

This phase parameter space should include physical data like external fields or geometric parameters that can be modified to explore different types of quantum phases. 

Since persistent homology is typically applied to some kind of data cloud, it's crucial that we define it by mapping quantum state points to the classical metric space. 

\begin{definition}[Quantum State Cloud]
For a quantum many-body system with parameters in $P$, a quantum state cloud is a collection of points in an appropriate metric space obtained by mapping quantum states to classical data points using expectation values of a fixed set of observables $\{O_{i}\}$

Let $|\psi(\lambda)\rangle$ be the ground state of the system at parameter value $\lambda \in P$. Given a set of observables $\{O_1, O_2, \dots, O_m\}$, we define a map $\Phi$ from the space of quantum states to $\mathbb{R}^m$ by:
\[
\Phi(|\psi(\lambda)\rangle) = (\langle \psi(\lambda)|O_1|\psi(\lambda)\rangle, \langle \psi(\lambda)|O_2|\psi(\lambda)\rangle, \dots, \langle \psi(\lambda)|O_m|\psi(\lambda)\rangle)
\]

This map takes quantum states to points in $\mathbb{R}^m$, which can be equipped with the Euclidean metric. The resulting set of points $M_\lambda = \{\Phi(|\psi(\lambda)\rangle) \mid \lambda \in P\}$ forms the quantum state cloud. 

The choice of observables is crucial for capturing the relevant features of the quantum system. For topological phases, these observables should be sensitive to the global topological properties of the system. 
\end{definition}

\begin{remark}[On the Quantum-to-Classical Mapping]
The mapping $\Phi$ from quantum states to classical data points requires careful consideration of which quantum properties are preserved. While expectation values capture important physical characteristics, they do not preserve all quantum information. For topological phases, this mapping is justified because: (i) topological invariants manifest in expectation values of appropriate observables; (ii) quantum states in the same topological phase map to points with similar persistent homology features; and (iii) phase transitions appear as topological changes in the quantum state cloud. We note that this mapping preserves global topological properties and phase boundaries but loses information about quantum coherence and phase relationships between basis states. The choice of observables $\{O_i\}$ is therefore crucial and should include operators sensitive to the relevant topological invariants of the system under study.
\end{remark}

\begin{proposition}[Persistence of Topological Phases]
A topological phase of matter corresponds to persistent topological features in the barcode representation of the quantum state cloud as the system parameters vary.
\end{proposition}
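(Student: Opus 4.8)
The plan is to turn the (intentionally qualitative) statement into a precise claim and prove that. Let $P_{\mathrm{ph}} \subseteq P$ be a connected region representing one topological phase, let $M_{\mathrm{ph}} = \{\Phi(|\psi(\lambda)\rangle) : \lambda \in P_{\mathrm{ph}}\} \subset \mathbb{R}^m$ be the associated quantum state cloud, and let $H^{\varepsilon \to \varepsilon'}_k$ denote the persistent homology of its Vietoris--Rips filtration in the scale parameter $\varepsilon$. I want to show: (i) in each degree $k$ this barcode contains a fixed number of bars of length bounded below by a constant $\delta_0 > 0$ determined by the uniform spectral gap on $P_{\mathrm{ph}}$; and (ii) this ``long-bar'' data is an invariant of the phase — unchanged under continuous deformations of the system within $P_{\mathrm{ph}}$, and changing only when a bar is born or dies, which by Step 1 below cannot happen without the gap closing. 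These stable long bars are the ``persistent topological features'' the proposition refers to.

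Step 1 (continuity of the cloud). By the Definition of Quantum Phase, points of $P_{\mathrm{ph}}$ are connected by paths of gapped local Hamiltonians; a uniform spectral gap makes the ground-state spectral projector, hence $|\psi(\lambda)\rangle$ as an element of the state space, locally Lipschitz in $\lambda$ in the trace (or Bures) metric of the Quantum State Metric Phase. Each $O_i$ being bounded, $\lambda \mapsto \langle \psi(\lambda)|O_i|\psi(\lambda)\rangle$ is Lipschitz, so $\Phi \circ |\psi(\cdot)\rangle : P_{\mathrm{ph}} \to (\mathbb{R}^m,\|\cdot\|_2)$ is continuous; consequently a small motion of the parameters moves $M_{\mathrm{ph}}$ by a small amount in Hausdorff, hence Gromov--Hausdorff, distance, with modulus controlled by the gap.

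Step 2 (stability of the barcode). By the Stability of Persistence Diagrams theorem the bottleneck distance between the persistence diagrams of two finite clouds is at most their Gromov--Hausdorff distance; combined with Step 1 this makes the persistence diagram — equivalently the barcode — of the filtration locally Lipschitz over $P_{\mathrm{ph}}$. Hence no bar is created or destroyed discontinuously inside the phase: a bar of length $> \delta$ at one parameter value deforms to a bar of length $> \delta - o(1)$ nearby, and since $\overline{P_{\mathrm{ph}}}$ is connected (and compact in the relevant examples), the set of bars of length $\geq \delta_0$, with $\delta_0$ fixed by the uniform gap, has constant cardinality in each degree $k$ throughout $P_{\mathrm{ph}}$. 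Translating through the Barcode Interpretation theorem, $\operatorname{rank} H^{\varepsilon \to \varepsilon'}_k$ is constant over the phase for every wide window $[\varepsilon,\varepsilon']$, which is claims (i) and (ii).

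Step 3 (tying the bars to the topological invariant — the main obstacle). The delicate point, and the one that uses the hypothesis flagged in the Quantum State Cloud remark, is that the \emph{algebraic} invariant characterizing the phase must correspond to an actual nontrivial, robustly persistent homology class of $M_{\mathrm{ph}}$ rather than be erased by the lossy map $\Phi$. I would require $\{O_i\}$ to contain a complete family of order parameters / string (nonlocal) operators for the invariant, so that the invariant is a locally constant function of $\Phi(|\psi(\lambda)\rangle)$; its level sets then cut $\mathbb{R}^m$ into pieces with distinct sublevel-set topology, so any change of the invariant forces the birth or death of a bar. Constancy of the invariant on $P_{\mathrm{ph}}$ is then exactly matched by the stable long bars of Step 2, while a quantum phase transition — where by the Definition of Quantum Phase Transition the ground state becomes non-analytic and the Lipschitz estimate of Step 1 breaks at $\lambda_c$ — must register as a discontinuous change of the barcode, the converse half of the ``correspondence,'' which is made quantitative by the persistent Dirac operator spectrum in the results that follow.
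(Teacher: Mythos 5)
Your proposal is correct to the same degree the paper's argument is, but it takes a genuinely different and considerably more quantitative route. The paper's proof is a two-sentence heuristic: states in the same phase share the same invariants, and these are simply \emph{asserted} to ``manifest as stable topological features'' in the persistent homology; nothing further is argued. You instead decompose the claim into (1) gap-controlled continuity of $\lambda \mapsto \Phi(|\psi(\lambda)\rangle)$, (2) bottleneck stability of the persistence diagrams (the paper's Theorem 2.4), and (3) an explicit hypothesis that the observables $\{O_i\}$ detect the invariant. Steps (1)--(2) essentially anticipate the machinery the paper only develops later (Theorems 5.6 and 6.1), so you are proving the proposition with tools the paper postpones; what this buys is an actual quantitative reason why long bars are stable inside a phase rather than a bare assertion, and Step 3 has the virtue of naming the hypothesis the paper leaves implicit.

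Two caveats. First, in Step 2 the constancy of the number of bars of length $\geq \delta_0$ over all of $P_{\mathrm{ph}}$ does not follow from stability alone: bottleneck continuity only forbids discontinuous jumps, and a bar's length can drift continuously below any fixed threshold $\delta_0$ as $\lambda$ traverses the phase without the gap ever closing. You need either to weaken the claim to the paper's local version (constancy in a neighborhood of each non-critical $\lambda$, as in Theorem 6.1) or to supply a uniform lower bound on the lengths of the relevant bars across the whole phase, which the spectral gap by itself does not give. Second, the central physical assertion in Step 3 --- that a change in a locally constant invariant on the cloud forces the birth or death of a bar in some degree $k$ --- is exactly the step the paper also leaves unproven; the inference from ``level sets of the invariant cut $\mathbb{R}^m$ into pieces'' to ``different persistent homology of the Vietoris--Rips filtration'' is immediate only for $k=0$ at small scales and is not established for $k \geq 1$. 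Since the proposition as stated is informal, neither gap is disqualifying, but both should be flagged as assumptions rather than conclusions.
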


\begin{proof}
Let $\{M_\lambda\}$ be the quantum state cloud for parameter values $\lambda$ in a region $P' \subset P$. For each $\lambda$, we can compute the persistent homology of $M_\lambda$. If $\lambda_1$ and $\lambda_2$ belong to the same topological phase, then the corresponding ground states $|\psi(\lambda_1)\rangle$ and $|\psi(\lambda_2)\rangle$ share the same topological invariants. These invariants manifest as stable topological features in the persistent homology, resulting in similar barcode representations. Conversely, if $\lambda_1$ and $\lambda_2$ belong to different topological phases, the topological invariants change, leading to qualitatively different persistent homology and hence different barcode representations.  
\end{proof}

This proposition establishes a link between topological quantum phases and barcodes in topological data analysis.

\section{Simplicial Complex Representation}

Since persistent homology approximates the shape of data using simplicial complexes, we must now construct appropriate simplicial complexes in order to capture key topological features.

\begin{definition}[Vietoris-Rips Complex for Quantum State Clouds]
Given a quantum state cloud $M_\lambda$ at parameter value $\lambda$ and a scale parameter $\varepsilon > 0$, the Vietoris-Rips complex $\text{VR}(M_\lambda, \varepsilon)$ is the simplicial complex where a $k$-simplex $\sigma = [p_0, p_1, \dots, p_k]$ is included if and only if $d(p_i, p_j) \leq 2\varepsilon$ for all $0 \leq i,j \leq k$, where $p_i$ represents the point in the quantum state cloud corresponding to the quantum state $|\psi_i(\lambda)\rangle$, and $d$ is the metric in the embedding space. 

This definition allows us to construct a family of simplicial complexes at different scales for a given quantum state cloud, forming a filtration as the scale $\varepsilon$ increases. \cite{ameneyro2022quantumpersistenthomology}
\end{definition}

\begin{theorem}[Filtration Property of Quantum Simplicial Complexes]
For a quantum state cloud $M_\lambda$ and scales $\varepsilon_1 < \varepsilon_2$, we have $\text{VR}(M_\lambda, \varepsilon_1) \subseteq \text{VR}(M_\lambda, \varepsilon_2)$, forming a filtration of simplicial complexes.
\end{theorem}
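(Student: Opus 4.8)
The plan is to verify the claimed inclusion simplex-by-simplex directly from the definition of the Vietoris--Rips complex, and then observe that the resulting nested family satisfies the formal requirements of a filtration.

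First I would fix $\varepsilon_1 < \varepsilon_2$ and let $\sigma = [p_0, p_1, \dots, p_k]$ be an arbitrary $k$-simplex of $\text{VR}(M_\lambda, \varepsilon_1)$. By definition this means $d(p_i, p_j) \le 2\varepsilon_1$ for all $0 \le i, j \le k$, where $d$ is the Euclidean metric on $\mathbb{R}^m$ induced by the quantum-to-classical map $\Phi$. Since $2\varepsilon_1 < 2\varepsilon_2$, transitivity of the order on $\mathbb{R}$ gives $d(p_i, p_j) \le 2\varepsilon_2$ for all such $i, j$, which is precisely the membership condition for $\sigma$ in $\text{VR}(M_\lambda, \varepsilon_2)$. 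Hence every simplex of the smaller complex is a simplex of the larger one, i.e.\ $\text{VR}(M_\lambda, \varepsilon_1) \subseteq \text{VR}(M_\lambda, \varepsilon_2)$ as collections of simplices.

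Second I would check that each $\text{VR}(M_\lambda, \varepsilon)$ is genuinely an abstract simplicial complex: singletons are always included (the bound is vacuous), and if $\sigma$ satisfies the pairwise-distance bound then so does every subset of its vertices, since that bound is inherited by subcollections; thus the complex is closed under taking faces. Combined with the inclusion above, the family $\{\text{VR}(M_\lambda, \varepsilon)\}_{\varepsilon > 0}$ is a monotone nested family of simplicial complexes whose inclusion maps are simplicial; this is exactly the data of a filtration indexed by the scale parameter, and passing to simplicial chain complexes with field coefficients produces the persistence complex to which Definition~1 and the Barcode Representation theorem apply.

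The only point requiring any care — and it is minor — is to use the ``if and only if'' in the definition of $\text{VR}(M_\lambda, \varepsilon)$ in the right direction: membership is governed solely by pairwise distances compared against the single threshold $2\varepsilon$, with no other dependence on $\varepsilon$, so enlarging $\varepsilon$ can only add simplices and never remove them; and the metric $d$ is the same for both scales, so comparing the thresholds $2\varepsilon_1$ and $2\varepsilon_2$ is legitimate. No genuine obstacle arises here; the entire content of the theorem is the monotonicity of this threshold condition in $\varepsilon$.
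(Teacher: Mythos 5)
Your proposal is correct and follows essentially the same route as the paper's proof: take an arbitrary simplex of $\text{VR}(M_\lambda,\varepsilon_1)$, note that the pairwise-distance bound $d(p_i,p_j)\le 2\varepsilon_1 < 2\varepsilon_2$ transfers membership to $\text{VR}(M_\lambda,\varepsilon_2)$. Your additional check that each Vietoris--Rips complex is closed under taking faces (so the nested family genuinely constitutes a filtration of simplicial complexes) is a small but worthwhile supplement that the paper omits.
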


\begin{proof}
Let $\sigma = [p_0, p_1, \dots, p_k]$ be a $k$-simplex in $\text{VR}(M_\lambda, \varepsilon_1)$. By Definition 4.1, $d(p_i, p_j) \leq 2\varepsilon_1$ for all $0 \leq i,j \leq k$. Since $\varepsilon_1 < \varepsilon_2$, we have $d(p_i, p_j) \leq 2\varepsilon_1 < 2\varepsilon_2$ for all $0 \leq i,j \leq k$. Therefore, $\sigma \in \text{VR}(M_\lambda, \varepsilon_2)$, which proves that $\text{VR}(M_\lambda, \varepsilon_1) \subseteq \text{VR}(M_\lambda, \varepsilon_2)$.
\end{proof}

\begin{definition}[Boundary Operator for Quantum Simplicial Complexes]
For a Vietoris-Rips complex $\text{VR}(M_\lambda, \varepsilon)$ of a quantum state cloud, the $k$-th boundary operator $\partial_k: C_k(\text{VR}(M_\lambda, \varepsilon)) \to C_{k-1}(\text{VR}(M_\lambda, \varepsilon))$ is defined on each $k$-simplex as:
\begin{equation*}
\partial_k [p_0, p_1, \ldots, p_k] = \sum_{i=0}^{k} (-1)^i [p_0, \ldots, \hat{p}_i, \ldots, p_k]
\end{equation*}
where $\hat{p}_i$ indicates that $p_i$ is omitted.
\end{definition}

\begin{lemma}[Nilpotence of Boundary Operator]
For the boundary operator on quantum simplicial complexes, $\partial_{k-1} \circ \partial_k = 0$ for all $k \geq 1$.
\end{lemma}

\begin{proof}
For any $k$-simplex $\sigma = [p_0, p_1, \dots, p_k]$, we have:
\begin{align*}
\partial_{k-1} \circ \partial_k (\sigma) &= \partial_{k-1}\left(\sum_{i=0}^{k} (-1)^i [p_0, \ldots, \hat{p}_i, \ldots, p_k]\right) \\
&= \sum_{i=0}^{k} (-1)^i \partial_{k-1}([p_0, \ldots, \hat{p}_i, \ldots, p_k]) \\
\end{align*}
Each term $[p_0, \ldots, \hat{p}_j, \ldots, \hat{p}_i, \ldots, p_k]$ with $j < i$ appears twice in the expansion: once with coefficient $(-1)^i \cdot (-1)^j$ when removing $p_i$ first and $p_j$ second, and once with coefficient $(-1)^j \cdot (-1)^{i-1}$ when removing $p_j$ first and $p_i$ second. Since 
\[
(-1)^i \cdot (-1)^j + (-1)^j \cdot (-1)^{i-1} = (-1)^{i+j} + (-1)^{j+i-1} = (-1)^{i+j} - (-1)^{i+j} = 0,
\]
all terms cancel in pairs.

Therefore, $\partial_{k-1} \circ \partial_k = 0$ for all $k \geq 1$.
\end{proof}

This establishes that the sequence of chain complexes and boundary operations form a chain complex, so we can define homology groups for quantum simplicial complexes.

\begin{definition}[Homology Groups of Quantum Simplicial Complexes]
For a Vietoris-Rips complex $\text{VR}(M_\lambda, \varepsilon)$ of a quantum state cloud, the $k$-th homology group is defined as:
\begin{equation*}
H_k(\text{VR}(M_\lambda, \varepsilon)) = \frac{\ker(\partial_k)}{\text{im}(\partial_{k+1})}
\end{equation*}
The $k$-th Betti number $\beta_k(\text{VR}(M_\lambda, \varepsilon))$ is the rank of this homology group.
\end{definition}

\begin{proposition}[Topological Interpretation of Homology Groups]
For a quantum state cloud $M_\lambda$, the Betti numbers $\beta_k(\text{VR}(M_\lambda, \varepsilon))$ count the number of $k$-dimensional topological features (connected components for $k=0$, loops for $k=1$, voids for $k=2$, etc.) in the Vietoris-Rips complex at scale $\varepsilon$.
\end{proposition}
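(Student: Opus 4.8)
The plan is to reduce the statement to the classical topological interpretation of simplicial Betti numbers. By the nilpotence lemma established above together with Definition 4.5, the pair $(C_*(\text{VR}(M_\lambda,\varepsilon)),\partial_*)$ is the chain complex of a finite simplicial complex, so every standard result of simplicial homology applies verbatim; the task is therefore bookkeeping, which I would organize degree by degree, proving the three named cases and then the general pattern.

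First, for $k=0$: taking $\partial_0 = 0$ we have $\ker\partial_0 = C_0(\text{VR}(M_\lambda,\varepsilon))$, while $\text{im}\,\partial_1$ is spanned by the $1$-chains $p_j - p_i$ coming from edges $[p_i,p_j]$. Declaring two vertices equivalent when joined by an edge-path in the $1$-skeleton partitions the vertex set into exactly the connected components of $\text{VR}(M_\lambda,\varepsilon)$, and a short linear-algebra computation then identifies $H_0 \cong F^{c}$, where $c$ is the number of such components; hence $\beta_0 = c$, the count of connected components.

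Next, for $k=1$: an element of $\ker\partial_1$ is a combination of edges with vanishing boundary, i.e.\ a superposition of closed edge-loops, while $\text{im}\,\partial_2$ consists precisely of those loops that bound a filled triangulated region inside the complex. The quotient $H_1$ therefore records edge-loops modulo those that are capped off, so $\beta_1$ is the number of independent $1$-dimensional holes. For general $k$ the same dichotomy applies: $\ker\partial_k$ consists of $k$-cycles — closed $k$-dimensional configurations assembled from $k$-simplices — and $\text{im}\,\partial_{k+1}$ of those that are boundaries of $(k+1)$-chains, so $\beta_k = \operatorname{rank} H_k$ counts the independent $k$-dimensional voids, matching the informal description in the statement. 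For full rigor I would add that the informal names ("component", "loop", "void") name a choice of basis of $H_k$, and that the picture in terms of the geometric realization $|\text{VR}(M_\lambda,\varepsilon)|$ is licensed by the equivalence of simplicial and singular homology for finite complexes.

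I do not anticipate a genuine obstacle: the statement is essentially a dictionary entry translating textbook simplicial homology into the language of quantum state clouds. The only points demanding care are keeping the combinatorial complex distinct from its geometric realization, and being explicit that "a $k$-dimensional topological feature" is \emph{defined} to mean a generator of $H_k$ — so that what is canonically well defined is the rank $\beta_k$, not any particular set of loops or voids. I would flag this last point in or just after the statement so that the proposition is not over-interpreted.
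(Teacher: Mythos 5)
Your proposal is correct and follows essentially the same route as the paper, which likewise appeals directly to the standard identification of $\ker\partial_k$ with $k$-cycles and $\operatorname{im}\partial_{k+1}$ with $k$-boundaries so that $\beta_k$ counts $k$-dimensional holes; you simply spell out the $k=0$ and $k=1$ cases and add the (worthwhile) caveat that only the rank $\beta_k$, not a particular choice of ``features,'' is canonically defined.
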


\begin{proof}
This follows directly from the definition of homology groups in algebraic topology. The kernel of $\partial_k$ consists of $k$-cycles, while the image of $\partial_{k+1}$ consists of $k$-boundaries. The quotient gives non-trivial $k$-dimensional holes, which correspond to topological features of dimension $k$. 
\end{proof}

These definitions and properties establish the foundation for applying persistent homology to quantum state clouds through simplicial complex constructions.

\section{Persistent Homology of Quantum Systems}

Now that we have established how simplicial complexes for quantum state clouds are defined, we can develop the theory of persistent homology for quantum systems. 

\begin{definition}[Quantum Persistent Module]
For a quantum system with parameter space \( P \), let \( M_\lambda \) be the quantum state cloud at parameter value \( \lambda \). The quantum persistent module \( M_k \) of dimension \( k \) is the persistence module formed by the sequence of homology groups 
\[
\{ H_k(\text{VR}(M_\lambda, \varepsilon)) \}_{\varepsilon > 0}
\]
together with the induced homomorphisms 
\[
h_{\varepsilon, \varepsilon'}: H_k(\text{VR}(M_\lambda, \varepsilon)) \to H_k(\text{VR}(M_\lambda, \varepsilon')) \quad \text{for } \varepsilon \leq \varepsilon'.
\] 
\end{definition}

\begin{theorem}[Decomposition of Quantum Persistent Module]

The quantum persistent module \( M_k \) of a finite-dimensional quantum system decomposes uniquely into a direct sum of interval modules:
\[
M_k \cong \bigoplus_{j} I[b_j, d_j)
\]
where \( I[b_j, d_j) \) is the interval module supported on \( [b_j, d_j) \) and \( b_j, d_j \in \mathbb{R}^+ \cup \{\infty\} \) are birth and death scales of the \( j \)-th topological feature.

\end{theorem}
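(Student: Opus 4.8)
The plan is to reduce the statement to the Barcode Representation theorem recalled above by exploiting the finiteness of the quantum state cloud, and then to obtain uniqueness from a Krull--Schmidt argument. First I would record the finiteness input: for a finite-dimensional quantum system the quantum state cloud $M_\lambda$ is a finite subset of $\mathbb{R}^m$, so $\text{VR}(M_\lambda,\varepsilon)$ has finitely many simplices for every $\varepsilon$, each chain group $C_k(\text{VR}(M_\lambda,\varepsilon))$ is a finite-dimensional $F$-vector space, and hence so is each $H_k(\text{VR}(M_\lambda,\varepsilon))$. Moreover the set of pairwise distances among the points of $M_\lambda$ is finite, so there are only finitely many critical scales $0 = \varepsilon_0 < \varepsilon_1 < \dots < \varepsilon_N$ at which $\text{VR}(M_\lambda,\varepsilon)$ changes; on each half-open interval $[\varepsilon_\ell,\varepsilon_{\ell+1})$ the complex is constant, so $H_k$ is constant there and the structure maps $h_{\varepsilon,\varepsilon'}$ for $\varepsilon,\varepsilon'$ in the same interval are identities. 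Thus $M_k$ is determined by the finite persistence complex $H_k(\text{VR}(M_\lambda,\varepsilon_0)) \to H_k(\text{VR}(M_\lambda,\varepsilon_1)) \to \dots \to H_k(\text{VR}(M_\lambda,\varepsilon_N))$.

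Next I would feed this finite persistence complex into the structure theory for graded $F[x]$-modules. Encode it as the finitely generated non-negatively graded module $\bigoplus_\ell H_k(\text{VR}(M_\lambda,\varepsilon_\ell))$, with $x$ acting as the transition map $H_k(\text{VR}(M_\lambda,\varepsilon_\ell)) \to H_k(\text{VR}(M_\lambda,\varepsilon_{\ell+1}))$. Because $F$ is a field, $F[x]$ is a principal ideal domain, and the Barcode Representation theorem gives a decomposition into free summands $x^{t_i}F[x]$ and torsion summands $x^{r_j}\bigl(F[x]/(x^{s_j}F[x])\bigr)$. Translating back through the dictionary $\ell \leftrightarrow \varepsilon_\ell$: a free summand $x^{t_i}F[x]$ is the interval module $I[\varepsilon_{t_i},\infty)$, and a torsion summand $x^{r_j}(F[x]/(x^{s_j}F[x]))$ is $I[\varepsilon_{r_j},\varepsilon_{r_j+s_j})$. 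This yields $M_k \cong \bigoplus_j I[b_j,d_j)$ with $b_j \in \mathbb{R}^+$ and $d_j \in \mathbb{R}^+ \cup \{\infty\}$, as claimed.

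For uniqueness I would use that interval modules are indecomposable with local (indeed, one-dimensional) endomorphism ring, so by the Krull--Schmidt--Azumaya theorem the decomposition of the finite-length module $M_k$ into indecomposables is unique up to reordering and isomorphism. Finally, $I[b,d) \cong I[b',d')$ forces $b=b'$ and $d=d'$: the birth value is the infimum of scales at which the module is nonzero and the death value is the infimum of scales at which it becomes and stays zero, and both are isomorphism invariants. Hence the multiset of intervals $\{[b_j,d_j)\}$ --- the barcode of $M_k$ --- is well defined.

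The main obstacle is the first step: the theorem is phrased over the continuous parameter $\varepsilon \in \mathbb{R}^+$, while the available structure/barcode theorem applies to finite persistence complexes, so one must justify that passing to the finite model $\{\varepsilon_\ell\}$ loses nothing --- that is, verify tameness (finitely many critical scales, local constancy, and compatibility of the induced maps with the discretization) so that no topological feature is born or dies strictly between consecutive critical values. A minor but necessary hygiene point is to fix $F$ as a field in the definition of the homology groups above, so that $F[x]$ is a PID; the decomposition and its uniqueness can fail over a general coefficient ring.
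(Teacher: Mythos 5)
Your proposal follows essentially the same route as the paper: use the finiteness of the quantum state cloud to conclude that \( M_k \) is a finite persistence module, then invoke the Barcode Representation Theorem (the structure theorem for graded \( F[x] \)-modules over the PID \( F[x] \)) to obtain the interval decomposition and its uniqueness. Your version is in fact more careful than the paper's — you explicitly justify the discretization to finitely many critical scales, translate the free and torsion summands into intervals, and supply a Krull--Schmidt argument for uniqueness, all of which the paper leaves implicit by simply citing Theorem 2.2.
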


\begin{proof}

Consider the quantum state cloud \( M_\lambda \) for a fixed parameter value \( \lambda \in P \). By Definition 4.1, we construct the Vietoris-Rips complex \( \text{VR}(M_\lambda, \epsilon) \) for varying scales \( \epsilon > 0 \). According to Theorem 4.2, these complexes form a filtration as \( \epsilon \) increases.
For each \( k \geq 0 \), we can define the \( k \)-th persistent homology groups \( H_k^{\epsilon \to \epsilon'}(M_\lambda) \) as the image of the induced homomorphism between homology groups \( H_k(\text{VR}(M_\lambda, \epsilon)) \to H_k(\text{VR}(M_\lambda, \epsilon')) \) for \( \epsilon \leq \epsilon' \).

By Definition 5.1, the quantum persistent module \( M_k \) is a persistence module in the sense of \cite{Ghrist2008}. Since we consider a finite-dimensional quantum system, the quantum state cloud consists of a finite number of points, resulting in a finite number of simplices in any Vietoris-Rips complex. Consequently, \( M_k \) is a finite persistence module.

Applying the Barcode Representation Theorem (Theorem 2.2), we obtain the stated decomposition into interval modules:
\[
M_k \cong \bigoplus_{j} I[b_j, d_j)
\]
Each interval module \( I[b_j, d_j) \) corresponds to a topological feature of dimension \( k \) (e.g., a loop for \( k = 1 \)) that appears at scale \( b_j \) and disappears at scale \( d_j \). The uniqueness of this decomposition follows directly from the uniqueness property established by Theorem 2.2.
\end{proof}

\begin{theorem}[Quantum Barcode Classification]
Let two quantum systems be characterized by Hamiltonians $H_1(\lambda)$ and $H_2(\lambda)$ with ground states $\ket{\psi_1(\lambda)}$ and $\ket{\psi_2(\lambda)}$, respectively. These systems belong to the same topological phase as defined by standard topological invariants (such as Chern numbers, winding numbers, or topological indices) if and only if their barcode representations have the same persistent homology groups for sufficiently large persistence intervals.
\end{theorem}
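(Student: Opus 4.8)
The plan is to prove the two implications separately, after reducing the statement to the ``long bar'' content of the two barcodes. Using the Decomposition of the Quantum Persistent Module (Theorem 5.2), write, for each degree $k$ and each system $r\in\{1,2\}$, the barcode as $\bigoplus_j I[b^{(r)}_j,d^{(r)}_j)$. By the Barcode Interpretation Theorem (Theorem 2.3), the rank of $H^{i\to j}_k$ equals the number of bars containing $[i,j]$, so the hypothesis ``same persistent homology groups for sufficiently large persistence intervals'' unpacks to: there is $L>0$ such that for every $k$ and every $i<j$ with $j-i\ge L$ the number of bars containing $[i,j]$ is the same for both systems. I will record this by calling the multiset of bars of length $>L$ (together with any infinite bars) the $L$-\emph{essential barcode} $\mathcal{B}^L_k$; choosing $L$ inside a gap of the length spectrum (the window that, in the topological setting, separates the ``essential'' bars from the short ``noise'' bars), the hypothesis is equivalent to $\mathcal{B}^L_1=\mathcal{B}^L_2$ in every degree.

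\emph{($\Rightarrow$) Same phase implies same essential barcode.} Assume $H_1(\lambda)$ and $H_2(\lambda)$ lie in the same topological phase. By the definition of a quantum phase there is a continuous path $t\mapsto H(t)$, $t\in[0,1]$, of gapped local Hamiltonians joining them along which the spectral gap stays bounded below, so the ground state $\ket{\psi(t)}$ varies continuously; then each $t\mapsto\langle\psi(t)|O_i|\psi(t)\rangle$ is continuous and bounded by $\|O_i\|$, so the induced quantum state clouds form a continuous, uniformly bounded family in $\mathbb{R}^m$. In particular $t\mapsto M_t$ is continuous in the Gromov--Hausdorff distance, hence by Stability of Persistence Diagrams (Theorem 2.4) $t\mapsto\mathrm{dgm}_k(M_t)$ is continuous in the bottleneck distance. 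Since the gap never closes, no bar can migrate across the length scale separating essential from noise bars: the destruction of an essential bar would force a bottleneck matching to the diagonal of cost exceeding the width of the essential window, which by Theorem 2.4 would require the cloud (hence the Hamiltonian) to vary non-continuously, i.e.\ a gap closing somewhere on the path. Therefore the count and positions of the essential bars are stable along the path, and for $L$ taken suitably large $\mathcal{B}^L_1=\mathcal{B}^L_2$, which is the asserted condition.

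\emph{($\Leftarrow$) Same essential barcode implies same phase.} I argue the contrapositive. Suppose $H_1$ and $H_2$ lie in different topological phases; then their ground states carry different values of the relevant topological invariant (Chern number, winding number, topological index). By the Remark on the quantum-to-classical mapping the observable set $\{O_i\}$ is chosen so that this invariant manifests in the expectation values, and by the Persistence of Topological Phases proposition a change of the invariant produces a qualitative change in the persistent homology of the quantum state cloud --- concretely, the birth or death of an essential bar, or a change of an essential Betti number $\beta^{\varepsilon,\varepsilon'}_k$ in the topological window. In the symmetry-protected setting the invariant is $\mathbb{Z}$- or $\mathbb{Z}_2$-valued and can be read off as a fixed functional of the persistent Betti numbers extracted, via the Quantum Extraction of Persistent Betti Numbers (Theorem 2.6), from the persistent Dirac operator of $M_\lambda$ at scales inside that window; two systems with identical $\mathcal{B}^L_k$ yield identical such Betti numbers, hence the same invariant --- a contradiction. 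Thus equal essential barcodes force equal invariants, i.e.\ the same topological phase, completing the equivalence.

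\emph{Main obstacle.} The delicate direction is ($\Leftarrow$): one must know that the topological invariant is genuinely a \emph{function} of the essential barcode, not merely correlated with it, and this is exactly where the choice of observables carries the weight. A fully rigorous treatment requires pinning down the dictionary ``invariant $\leftrightarrow$ combination of persistent Betti numbers'' for the class of phases considered; for general interacting SPT phases this dictionary is not canonical, so the theorem is best read as holding \emph{relative to an observable set that resolves the invariant}, as the Remark already flags. A secondary subtlety, on the ($\Rightarrow$) side, is that the threshold $L$ and the existence of a clean gap in the length spectrum depend on the connecting path and on $\{O_i\}$; making ``sufficiently large persistence interval'' uniform uses compactness of the (finite-length, gapped) path together with a quantitative lower bound on essential-bar lengths in terms of the spectral gap, which I would obtain from the gap-controlled variation of the cloud via Theorem 2.4.
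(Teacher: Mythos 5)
Your proposal reaches the same conclusion but by a genuinely different route in the forward direction. The paper argues $(\Rightarrow)$ directly: topological invariants are integrals of geometric quantities, these ``manifest as global structural features'' of the quantum state cloud, and therefore clouds of systems in the same phase have isomorphic persistent homology over large intervals. You instead invoke the actual definition of a quantum phase --- the existence of a gapped path $t\mapsto H(t)$ joining the two Hamiltonians --- and propagate continuity of the ground state through the observable map to continuity of the persistence diagrams in bottleneck distance via the stability theorem (Theorem 2.4, used in the paper only later, in Theorem 5.6). This is closer to a real proof, since it uses the hypothesis ``same phase'' in its precise form rather than through the proxy of shared invariants. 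One caution: continuity in bottleneck distance alone does not forbid an essential bar from shrinking continuously to zero length and dying, since matching a bar to the diagonal costs half its length, and that cost can decrease to zero along a perfectly continuous path. What you actually need is a uniform lower bound on essential-bar lengths along the whole connecting path (a persistent gap in the length spectrum); you do flag this in your closing paragraph, but the body of your $(\Rightarrow)$ argument reads as if continuity suffices, and that step is where the real work would be.

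In the reverse direction you and the paper lean on the same unproved dictionary: that the standard invariant is a function of the large-persistence barcode for the chosen observables. The paper simply asserts that identical persistent homology for large persistence intervals implies identical values of the topological invariants; you make the same move (via the Remark on the quantum-to-classical mapping and Proposition 3.3) but explicitly isolate it as the main obstacle and propose reading the theorem relative to an observable set that resolves the invariant. That is an honest improvement in presentation rather than a new argument: the theorem as stated is only as strong as this dictionary, and neither your proposal nor the paper's proof establishes it beyond asserting that the observables can be so chosen.
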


\begin{proof}
Let $S_1$ and $S_2$ be two quantum systems with Hamiltonians $H_1(\lambda)$ and $H_2(\lambda)$ parameterized by $\lambda \in P$. Let $M_1$ and $M_2$ be their respective quantum state clouds constructed from the same set of observables $\{O_i\}$.

First, recall that according to Definition 2.13, topological phases are characterized by non-local topological invariants that remain invariant under continuous deformations of the system that do not close the energy gap. These invariants include Chern numbers, winding numbers, $\mathbb{Z}_2$ indices, and other topological indices that quantify global properties of the system's ground state.

For each system, we construct the Vietoris–Rips complexes $\mathrm{VR}(M_1, \varepsilon)$ and $\mathrm{VR}(M_2, \varepsilon)$ for various scales $\varepsilon > 0$ as per Definition 4.1. For each pair of scales $\varepsilon < \varepsilon'$, we compute the persistent homology groups $H_k^{\varepsilon \to \varepsilon'}(M_1)$ and $H_k^{\varepsilon \to \varepsilon'}(M_2)$ for all relevant dimensions $k$.

By Theorem 5.2, the quantum persistent module $M_k$ for each system decomposes into a direct sum of interval modules:
\begin{align*}
M_k(M_1) &\cong \bigoplus_j I[b^1_j, d^1_j) \\
M_k(M_2) &\cong \bigoplus_j I[b^2_j, d^2_j)
\end{align*}

Applying the Barcode Interpretation Theorem (Theorem 2.3), the rank of the persistent homology group $H_k^{\varepsilon \to \varepsilon'}(M_i)$ equals the number of intervals in the barcode of $H_k(M_i)$ that span $[\varepsilon, \varepsilon']$. This gives us the persistent Betti numbers $\beta_k^{\varepsilon \to \varepsilon'}(M_i)$.

Now, we establish the equivalence:

$(\Rightarrow)$ Let $S_1$ and $S_2$ belong to the same topological phase according to standard topological invariants. This means their ground states $\ket{\psi_1(\lambda)}$ and $\ket{\psi_2(\lambda)}$ share the same values for topological invariants.

Since these invariants can be expressed as integrals of certain geometric quantities over parameter space, they manifest as global structural features in the quantum state clouds $M_1$ and $M_2$. These global features are precisely what persistent homology captures for sufficiently large persistence intervals $[\varepsilon, \varepsilon']$.

Specifically, the observables $\{O_i\}$ can be chosen to include Berry curvature, momentum-space winding, or other quantities that determine topological invariants. The expectation values of these observables encode the topological information in the geometry of the quantum state clouds.

For two systems in the same topological phase, the resulting quantum state clouds $M_1$ and $M_2$ will have topologically equivalent features that persist over significant scale intervals. Therefore, their persistent homology groups $H_k^{\varepsilon \to \varepsilon'}(M_1)$ and $H_k^{\varepsilon \to \varepsilon'}(M_2)$ will be isomorphic for sufficiently large intervals $[\varepsilon, \varepsilon']$, leading to identical persistent Betti numbers:
\[
\beta_k^{\varepsilon \to \varepsilon'}(M_1) = \beta_k^{\varepsilon \to \varepsilon'}(M_2).
\]

$(\Leftarrow)$ Conversely, suppose the barcode representations of $S_1$ and $S_2$ have the same persistent homology groups for sufficiently large persistence intervals, meaning $\beta_k^{\varepsilon \to \varepsilon'}(M_1) = \beta_k^{\varepsilon \to \varepsilon'}(M_2)$ for all dimensions $k$ and all sufficiently large intervals $[\varepsilon, \varepsilon']$.

The persistent homology groups capture the global topological structure of the quantum state clouds. Since the quantum state clouds are constructed from expectation values of observables that include those necessary to compute standard topological invariants, identical persistent homology for large persistence intervals implies identical values for these topological invariants.

More formally, topological invariants such as can be expressed as integrals of certain differential forms over the parameter space. These differential forms are constructed from the quantum states and are related to Berry curvature and similar geometric quantities. The quantum state clouds capture this geometric information through the expectation values of carefully chosen observables.

When two quantum state clouds have identical persistent homology groups for large persistence intervals, this indicates that their underlying topological structures are equivalent. Consequently, the integrals that define topological invariants must yield the same values for both systems.

Therefore, if the barcode representations of two quantum systems have the same persistent homology groups for sufficiently large persistence intervals, then they belong to the same topological phase according to standard topological invariants.
\end{proof}

\begin{corollary}[Phase Transition Detection]
A quantum phase transition between topological phases occurs at parameter value $\lambda_c$ if and only if there is a significant change in the persistent homology of the quantum state cloud at $\lambda_c$.
\end{corollary}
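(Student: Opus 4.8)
The plan is to obtain this corollary from the Quantum Barcode Classification theorem (Theorem 5.3), the stability of persistence diagrams (Theorem 2.4), and the definitions of quantum phase and phase transition (Definitions 2.11--2.13). First I would fix terminology: say the persistent homology of the quantum state cloud \emph{changes significantly at} $\lambda_c$ if there is no open neighborhood of $\lambda_c$ on which the persistent Betti numbers $\beta_k^{\varepsilon\to\varepsilon'}(M_\lambda)$ are constant for every dimension $k$ and every sufficiently large persistence interval $[\varepsilon,\varepsilon']$; equivalently, $\lambda\mapsto\mathrm{dgm}_k(M_\lambda)$ fails to be continuous at $\lambda_c$ in the bottleneck metric. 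With this convention the statement becomes a biconditional to be proved in two directions.

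For the forward direction, suppose a quantum phase transition between topological phases occurs at $\lambda_c$. By Definitions 2.12--2.13 there are parameter values $\lambda_-<\lambda_c<\lambda_+$, arbitrarily close to $\lambda_c$, whose ground states lie in distinct topological phases (distinct Chern/winding/$\mathbb{Z}_2$ data). Applying Theorem 5.3 to $H(\lambda_-),H(\lambda_+)$ built from the same observable set $\{O_i\}$: being in different phases, their barcodes cannot agree in persistent homology over all sufficiently large intervals, so $\beta_k^{\varepsilon\to\varepsilon'}(M_{\lambda_-})\neq\beta_k^{\varepsilon\to\varepsilon'}(M_{\lambda_+})$ for some $k$ and some large $[\varepsilon,\varepsilon']$. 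Since such $\lambda_\pm$ exist in every neighborhood of $\lambda_c$, no neighborhood of $\lambda_c$ carries constant persistent Betti numbers, i.e. a significant change at $\lambda_c$.

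For the converse I would prove the contrapositive via a continuity lemma. Suppose $H(\lambda)$ does \emph{not} undergo a phase transition on some interval $I\ni\lambda_c$. Then by Definition 2.11 the gap stays open on $I$, and the ground-state projector depends continuously — in fact analytically — on $\lambda$ by adiabatic continuation. Since the $O_i$ are fixed, $\Phi(|\psi(\lambda)\rangle)=(\langle\psi(\lambda)|O_i|\psi(\lambda)\rangle)_i$ is continuous in $\lambda$, so $\lambda\mapsto M_\lambda$ is continuous in the Gromov--Hausdorff metric. Theorem 2.4 then gives $d_B(\mathrm{dgm}_k(M_\lambda),\mathrm{dgm}_k(M_{\lambda'}))\le d_{GH}(M_\lambda,M_{\lambda'})\to 0$ as $\lambda'\to\lambda$, so the barcodes vary continuously and, away from the finitely many birth/death coordinates, $\beta_k^{\varepsilon\to\varepsilon'}$ is locally constant on $I$: no significant change at $\lambda_c$. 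Contrapositively, a significant change at $\lambda_c$ forces a phase transition in every neighborhood of $\lambda_c$, hence at $\lambda_c$ itself. I would then record the spectral reformulation: via Definition 2.5 and Theorem 2.6 the numbers $\beta_k^{\varepsilon\to\varepsilon'}$ are read off the spectrum of the persistent Dirac operator $B_k^{\varepsilon,\varepsilon'}$, so the significant change at $\lambda_c$ is equivalently a discontinuity of $\mathrm{spec}(B_k^{\varepsilon,\varepsilon'})$ there, recovering the spectral characterization advertised in the introduction.

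The main obstacle is the continuity lemma underpinning the converse: that absence of a phase transition on $I$ genuinely yields a Gromov--Hausdorff-continuous family of clouds. This is the physical input — it relies on the ground-state projector of a gapped local Hamiltonian being continuous in $\lambda$ while the gap is open (precisely the content of Definition 2.11), and on the functionals $\rho\mapsto\mathrm{Tr}(\rho O_i)$ being continuous on the state space, which needs the $O_i$ bounded (or the relevant moments finite). A secondary subtlety is making ``significant change'' robust against reshuffling of short (noise) bars; restricting, as Theorem 5.3 does, to sufficiently large persistence intervals resolves this, since only finitely many bars exceed any positive length threshold and their endpoints move continuously. Finally, Theorem 2.4 controls diagrams rather than individual $\beta_k^{\varepsilon\to\varepsilon'}$, so local constancy is asserted only at $(\varepsilon,\varepsilon')$ off the finite set of bar endpoints — harmless for the ``sufficiently large interval'' regime.
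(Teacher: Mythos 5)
Your proposal is correct and its core step coincides with the paper's: the paper's proof is simply the biconditional of Theorem 5.3 applied to the two clouds $M_{\lambda_-}$ and $M_{\lambda_+}$ on either side of $\lambda_c$, exactly as in your forward direction. Where you diverge is the converse. The paper again just invokes Theorem 5.3 (same phase on both sides if and only if equal persistent Betti numbers for all $k$ and all sufficiently large $[\varepsilon,\varepsilon']$) and reads off the contrapositive; you instead prove a separate continuity lemma, combining adiabatic continuation of the gapped ground state with the stability theorem (Theorem 2.4) to get local constancy of the $\beta_k^{\varepsilon\to\varepsilon'}$ away from bar endpoints. That argument is not used in the paper's proof of this corollary, but it is essentially the content of the paper's later Theorem 6.1 (Continuity of Persistent Betti Numbers), so you have in effect folded a subsequent result into the proof. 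What your route buys is a precise meaning for ``significant change'' (failure of local constancy / bottleneck-discontinuity of $\lambda\mapsto\mathrm{dgm}_k(M_\lambda)$), which the paper leaves informal, and an explanation of why short noise bars cannot trigger a false positive; the cost is that you import the physical hypothesis that the ground-state projector varies continuously while the gap is open, which the paper's minimal proof does not need. Your closing spectral reformulation via the persistent Dirac operator likewise anticipates the paper's Theorem 6.2 rather than anything in this corollary's proof. Both arguments ultimately inherit the same weakness, namely that they are only as rigorous as the heuristic equivalence asserted in Theorem 5.3.
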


\begin{proof}
This corollary follows directly from Theorem 5.3. Let $\lambda_c \in P$ be a parameter value, and consider parameter values $\lambda_- < \lambda_c < \lambda_+$ that are sufficiently close to $\lambda_c$.

Let $M_{\lambda_-}$ and $M_{\lambda_+}$ be the quantum state clouds at these parameter values. For each dimension $k$ and scales $\varepsilon < \varepsilon'$, we compute the persistent Betti numbers $\beta_k^{\varepsilon \to \varepsilon'}(M_{\lambda_-})$ and $\beta_k^{\varepsilon \to \varepsilon'}(M_{\lambda_+})$.

By Theorem 5.3, the system belongs to the same topological phase on both sides of $\lambda_c$ if and only if
\[
\beta_k^{\varepsilon \to \varepsilon'}(M_{\lambda_-}) = \beta_k^{\varepsilon \to \varepsilon'}(M_{\lambda_+})
\]
for all dimensions $k$ and all sufficiently large persistence intervals $[\varepsilon, \varepsilon']$.

Conversely, the system undergoes a phase transition at $\lambda_c$ if and only if there exists some dimension $k$ and some sufficiently large persistence interval $[\varepsilon, \varepsilon']$ such that
\[
\beta_k^{\varepsilon \to \varepsilon'}(M_{\lambda_-}) \neq \beta_k^{\varepsilon \to \varepsilon'}(M_{\lambda_+}).
\]

This change in persistent Betti numbers indicates a topological change in the structure of the quantum state cloud, which characterizes a quantum phase transition according to our framework. 
\end{proof}

\begin{theorem}[Relationship Between Persistent Dirac Operator and Boundary Operators]
The persistent Dirac operator \( B^{\varepsilon,\varepsilon'}_k \) as defined in Definition 2.5 is related to the boundary operators \( \partial_k^\varepsilon \) and \( \partial_k^{\varepsilon'} \) on the Vietoris--Rips complexes \( \mathrm{VR}(M_\lambda, \varepsilon) \) and \( \mathrm{VR}(M_\lambda, \varepsilon') \) respectively, such that its squared operator \( \left(B^{\varepsilon,\varepsilon'}_k\right)^2 \) contains the persistent combinatorial Laplacian 
\[
L^{\varepsilon,\varepsilon'}_k = \partial_k^{\varepsilon,\dagger} \partial_k^\varepsilon + \partial_{k+1}^{\varepsilon,\varepsilon'} \partial_{k+1}^{\varepsilon,\varepsilon'\dagger}
\]
as a diagonal block.
\end{theorem}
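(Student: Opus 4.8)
The plan is a direct block-matrix computation exploiting the fact that the middle graded summand decouples under squaring. First I would pin down the ambient graded space on which $B_k^{\varepsilon,\varepsilon'}$ acts, namely
\[
\mathcal{C} \;=\; C_{k-1}(\mathrm{VR}(M_\lambda,\varepsilon)) \,\oplus\, C_k(\mathrm{VR}(M_\lambda,\varepsilon)) \,\oplus\, C_{k+1}(\mathrm{VR}(M_\lambda,\varepsilon')),
\]
and record that $P_{k-1}^{\varepsilon}, P_k^{\varepsilon}, P_{k+1}^{\varepsilon'}$ are the orthogonal projections of $\mathcal{C}$ onto its three summands. I would then split $B_k^{\varepsilon,\varepsilon'} = D + S$, where $D$ carries the three off-diagonal boundary blocks and $S = -\xi\,\mathrm{diag}(P_{k-1}^{\varepsilon}, -P_k^{\varepsilon}, P_{k+1}^{\varepsilon'})$ is the diagonal grading term, so that $(B_k^{\varepsilon,\varepsilon'})^2 = D^2 + (DS + SD) + S^2$.

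The second step is to expand $D^2$ blockwise. Writing the nonzero blocks of $D$ as $\alpha$ (the $(1,2)$ block, i.e.\ $\partial_k^{\varepsilon}$ as a map $C_k^{\varepsilon}\to C_{k-1}^{\varepsilon}$), $\alpha^\dagger = \partial_k^{\varepsilon,\dagger}$ (the $(2,1)$ block), $\gamma = \partial_{k+1}^{\varepsilon,\varepsilon'}$ (the $(2,3)$ block), and $\gamma^\dagger = \partial_{k+1}^{\varepsilon,\varepsilon'\dagger}$ (the $(3,2)$ block), one computes that the middle summand $C_k^{\varepsilon}$ is invariant under $D^2$: all $D^2$-blocks coupling it to the outer two summands vanish (each such product contains a zero block of $D$), while $D^2$ restricted to $C_k^{\varepsilon}$ equals
\[
\alpha^\dagger\alpha + \gamma\gamma^\dagger \;=\; \partial_k^{\varepsilon,\dagger}\partial_k^{\varepsilon} + \partial_{k+1}^{\varepsilon,\varepsilon'}\partial_{k+1}^{\varepsilon,\varepsilon'\dagger} \;=\; L_k^{\varepsilon,\varepsilon'},
\]
exactly the persistent combinatorial Laplacian in the statement (the $(1,1),(1,3),(3,1),(3,3)$ blocks $\alpha\alpha^\dagger,\alpha\gamma,\gamma^\dagger\alpha^\dagger,\gamma^\dagger\gamma$ need not be analyzed further).

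Third, I would check that the $\xi$-terms do not disturb this block. Since $S$ is diagonal and $D_{22}=0$, both $(DS)_{22}$ and $(SD)_{22}$ vanish, so $DS+SD$ contributes nothing to the middle slot; and $S^2 = \xi^2\,\mathrm{diag}(P_{k-1}^{\varepsilon},P_k^{\varepsilon},P_{k+1}^{\varepsilon'})$ by idempotence, adding $\xi^2 P_k^{\varepsilon}$ there. Hence the middle diagonal block of $(B_k^{\varepsilon,\varepsilon'})^2$ is $L_k^{\varepsilon,\varepsilon'} + \xi^2 P_k^{\varepsilon}$, i.e.\ the persistent Laplacian up to the uniform shift $\xi^2$ on $C_k^{\varepsilon}$. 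In particular $L_k^{\varepsilon,\varepsilon'}$ sits inside $(B_k^{\varepsilon,\varepsilon'})^2$ as a diagonal block, and $\ker L_k^{\varepsilon,\varepsilon'}$ --- hence the persistent Betti number $\beta_k^{\varepsilon,\varepsilon'}$ --- is read off from the eigenspace of the middle block at eigenvalue $\xi^2$ (at $0$ when $\xi = 0$), which is the bridge to Theorem 2.6.

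I expect the main obstacle to be domain bookkeeping rather than any conceptual point. The persistent boundary map $\partial_{k+1}^{\varepsilon,\varepsilon'}$ is not square --- it is the restriction of $\partial_{k+1}^{\varepsilon'}$ to $\{\, c \in C_{k+1}^{\varepsilon'} : \partial_{k+1}^{\varepsilon'} c \in C_k^{\varepsilon} \,\}$, with image in $C_k^{\varepsilon}$ --- so one must verify that $\alpha^\dagger\alpha$ and $\gamma\gamma^\dagger$ really are operators on the \emph{same} space $C_k^{\varepsilon}$ before summing, and reconcile the notation of Definition 2.5 (whose $(1,2)$ entry is written $\partial_k^{\varepsilon,\varepsilon'}$) with the $\partial_k^{\varepsilon}$ of $L_k^{\varepsilon,\varepsilon'}$, checking that these agree on the relevant domain (or flagging the difference as a harmless typo). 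Once $\mathcal{C}$ and these domains are fixed, Hermiticity of $B_k^{\varepsilon,\varepsilon'}$ supplies the adjoint identities used above and the rest is routine matrix algebra.
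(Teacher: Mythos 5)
Your proposal is correct and follows essentially the same route as the paper's proof: square the $3\times 3$ block matrix and read off the middle diagonal block as $\partial_k^{\varepsilon,\dagger}\partial_k^{\varepsilon} + \partial_{k+1}^{\varepsilon,\varepsilon'}\partial_{k+1}^{\varepsilon,\varepsilon'\dagger}$. Your treatment is in fact slightly sharper than the paper's, since by splitting $B = D + S$ you pin down the $\xi$-contribution to the middle block as exactly $\xi^{2}P_k^{\varepsilon}$ (the cross terms $DS+SD$ vanishing there because $D_{22}=0$), where the paper only gestures at ``terms involving $\xi P_k^{\varepsilon}$,'' and you correctly flag the domain bookkeeping for the non-square persistent boundary maps that the paper leaves implicit.
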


\begin{proof}
From Definition 2.5 of the persistent Dirac operator:
\[
B^{\varepsilon,\varepsilon'}_k = 
\begin{pmatrix}
0 & \partial^{\varepsilon,\varepsilon'}_k & 0 \\
\partial^{\varepsilon,\dagger}_k & 0 & \partial^{\varepsilon,\varepsilon'}_{k+1} \\
0 & \partial^{\varepsilon,\varepsilon'\dagger}_{k+1} & 0
\end{pmatrix}
\xi
\begin{pmatrix}
P^{\varepsilon}_{k-1} & 0 & 0 \\
0 & -P^{\varepsilon}_k & 0 \\
0 & 0 & P^{\varepsilon'}_{k+1}
\end{pmatrix}
\]

Here, \( \partial^{\varepsilon,\varepsilon'}_k \) is the restriction of the boundary operator \( \partial^{\varepsilon'}_k \) to chains with boundary in \( \mathrm{VR}(M_\lambda, \varepsilon) \), and \( P^{\varepsilon}_k \) is the projection operator onto the subspace spanned by \( k \)-chains in \( \mathrm{VR}(M_\lambda, \varepsilon) \).

Computing the square of the persistent Dirac operator:
\[
\left(B^{\varepsilon,\varepsilon'}_k\right)^2 =
\begin{pmatrix}
\partial^{\varepsilon,\varepsilon'}_k \partial^{\varepsilon,\dagger}_k & 0 & \partial^{\varepsilon,\varepsilon'}_k \partial^{\varepsilon,\varepsilon'}_{k+1} \\
0 & \partial^{\varepsilon,\dagger}_k \partial^{\varepsilon,\varepsilon'}_k + \partial^{\varepsilon,\varepsilon'}_{k+1} \partial^{\varepsilon,\varepsilon'\dagger}_{k+1} & 0 \\
\partial^{\varepsilon,\varepsilon'\dagger}_{k+1} \partial^{\varepsilon,\dagger}_k & 0 & \partial^{\varepsilon,\varepsilon'\dagger}_{k+1} \partial^{\varepsilon,\varepsilon'}_{k+1}
\end{pmatrix}
+ \xi \text{ terms involving } P^{\varepsilon}_k \text{ and } P^{\varepsilon'}_{k+1}
\]

The middle diagonal block is precisely the persistent combinatorial Laplacian 
\[
L^{\varepsilon,\varepsilon'}_k = \partial_k^{\varepsilon,\dagger} \partial_k^\varepsilon + \partial_{k+1}^{\varepsilon,\varepsilon'} \partial_{k+1}^{\varepsilon,\varepsilon'\dagger},
\]
plus terms involving \( \xi P^{\varepsilon}_k \). The off-diagonal blocks contain terms that vanish due to the nilpotence property of boundary operators (Lemma 4.4), i.e., 
\[
\partial^{\varepsilon,\varepsilon'}_k \partial^{\varepsilon,\varepsilon'}_{k+1} = 0 \quad \text{and} \quad \partial^{\varepsilon,\varepsilon'\dagger}_{k+1} \partial^{\varepsilon,\dagger}_k = 0.
\]

Therefore, the squared persistent Dirac operator contains the persistent combinatorial Laplacian as a diagonal block, establishing the relationship between the persistent Dirac operator and the boundary operators on the quantum simplicial complexes.
\end{proof}

\begin{theorem}[Persistence Stability of Quantum State Clouds]
Let \( M_\lambda \) and \( M'_\lambda \) be quantum state clouds generated by two quantum systems with the same parameter space \( P \). If the distance between the systems' states is bounded by \( \delta \) for all \( \lambda \in P \), then the bottleneck distance between their persistence diagrams is also bounded by \( \delta \).
\end{theorem}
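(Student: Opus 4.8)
The plan is to reduce this statement to the Stability of Persistence Diagrams theorem (Theorem 2.4) already quoted in the preliminaries, so the main work is a bookkeeping argument translating a bound on quantum-state distance into a bound on the Gromov--Hausdorff distance of the two quantum state clouds. First I would fix notation: write $M_\lambda = \{\Phi(|\psi(\lambda)\rangle) : \lambda \in P\}$ and $M'_\lambda = \{\Phi'(|\psi'(\lambda)\rangle) : \lambda \in P\}$ for the two clouds in $\mathbb{R}^m$, where the map $\Phi$ records expectation values of the fixed observables $\{O_i\}$. The hypothesis ``the distance between the systems' states is bounded by $\delta$ for all $\lambda$'' I would read as $d_{\mathrm{tr}}(\rho_\lambda,\rho'_\lambda) \le \delta$ (or the Bures distance $d_B$), where $\rho_\lambda = |\psi(\lambda)\rangle\langle\psi(\lambda)|$, using the quantum state metric of Definition 2.12.

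The key steps, in order, are: (1) Establish a Lipschitz-type estimate showing that $\|\Phi(|\psi(\lambda)\rangle) - \Phi'(|\psi'(\lambda)\rangle)\|_2$ in $\mathbb{R}^m$ is controlled by $d_{\mathrm{tr}}(\rho_\lambda,\rho'_\lambda)$. For each observable, $|\langle O_i\rangle_\rho - \langle O_i\rangle_{\rho'}| = |\mathrm{Tr}((\rho-\rho')O_i)| \le \|O_i\|_{\mathrm{op}}\,\|\rho-\rho'\|_1 = 2\|O_i\|_{\mathrm{op}}\, d_{\mathrm{tr}}(\rho,\rho')$, so summing in quadrature gives a constant $C = 2\big(\sum_i \|O_i\|_{\mathrm{op}}^2\big)^{1/2}$ with $\|\Phi-\Phi'\| \le C\delta$. (2) Use the natural correspondence $\Phi(|\psi(\lambda)\rangle) \leftrightarrow \Phi'(|\psi'(\lambda)\rangle)$ indexed by the common parameter $\lambda$ to bound the Gromov--Hausdorff distance: $d_{GH}(M_\lambda, M'_\lambda) \le \tfrac12 \sup_\lambda \|\Phi(|\psi(\lambda)\rangle)-\Phi'(|\psi'(\lambda)\rangle)\| \le \tfrac12 C\delta$, via the standard fact that any $\eta$-correspondence certifies $d_{GH} \le \eta/2$. (3) Apply Theorem 2.4 (stability of persistence diagrams) to conclude $d_B(\mathrm{Dgm}(M_\lambda),\mathrm{Dgm}(M'_\lambda)) \le d_{GH}(M_\lambda,M'_\lambda) \le \tfrac12 C\delta$.

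The main obstacle is that the conclusion as stated, ``bounded by $\delta$'', only follows verbatim when $C \le 2$, i.e. when the observables are normalized so that $\sum_i \|O_i\|_{\mathrm{op}}^2 \le 1$; in general one gets a constant multiple of $\delta$. I would handle this by explicitly adding the normalization hypothesis $\|O_i\|_{\mathrm{op}} \le 1/\sqrt{m}$ (or simply stating the bound as $C\delta$ with $C$ as above and remarking that under the standard normalization of observables $C$ can be taken equal to $1$), which is harmless since rescaling observables only rescales the ambient Euclidean metric. A secondary subtlety is the precise meaning of ``distance between the systems' states''—if it refers instead to $\|\Phi - \Phi'\|$ directly in $\mathbb{R}^m$, step (1) is vacuous and the argument is even shorter; I would state which convention is used at the outset. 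Finally, I should note that Theorem 2.4 is stated for finite metric spaces, which is consistent with our finite-dimensional, finitely-sampled quantum state clouds, so no additional compactness hypotheses are needed.
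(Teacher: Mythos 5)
Your proposal follows essentially the same route as the paper's proof: bound each coordinate of the cloud map by $\|O_i\|_{\mathrm{op}}$ times the trace distance, aggregate to a Euclidean (hence Hausdorff/Gromov--Hausdorff) bound of a constant times $\delta$, and invoke the stability theorem (Theorem 2.4). Your version is in fact slightly more careful than the paper's, which simply ``absorbs'' the dimensional factor $\sqrt{m}$ into $\delta$ under the assumption $\|O_i\|_{\mathrm{op}}=1$, whereas you make the required normalization (or the resulting constant $C$) explicit.
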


\begin{proof}
Let \( |\psi(\lambda)\rangle \) and \( |\psi'(\lambda)\rangle \) be the quantum states of the two systems at parameter value \( \lambda \). By the definition of quantum state cloud, these states are mapped to points in a metric space using expectation values of observables \( \{ O_i \}_{i=1}^m \):

\[
p_\lambda = (\langle \psi(\lambda) | O_1 | \psi(\lambda) \rangle, \ldots, \langle \psi(\lambda) | O_m | \psi(\lambda) \rangle)
\]

If the distance between the quantum states is bounded by \( \delta \) in the trace norm, then by the properties of quantum expectation values, we have:

\[
| \langle \psi(\lambda) | O_i | \psi(\lambda) \rangle - \langle \psi'(\lambda) | O_i | \psi'(\lambda) \rangle | \leq \| O_i \|_{\text{op}} \cdot \| \| \psi(\lambda) \rangle \langle \psi(\lambda) | - | \psi'(\lambda) \rangle \langle \psi'(\lambda) \| \text{tr} \leq \| O_i \|_{\text{op}} \cdot \delta
\]

where \( \| O_i \|_{\text{op}} \) is the operator norm of \( O_i \). Assuming normalized observables with \( \| O_i \|_{\text{op}} = 1 \), the Euclidean distance between corresponding points in the quantum state clouds is at most \( \sqrt{m} \cdot \delta \).

Now, let \( \mathrm{VR}(M_\lambda, \epsilon) \) and \( \mathrm{VR}(M'_\lambda, \epsilon) \) be the Vietoris-Rips complexes constructed from these quantum state clouds at scale \( \epsilon \). By Theorem 2.4 (Stability of Persistence Diagrams), if two point clouds have Hausdorff distance at most \( \sqrt{m} \cdot \delta \), then the bottleneck distance between their persistence diagrams is also bounded by \( \sqrt{m} \cdot \delta \).

Since the dimensionality factor \( \sqrt{m} \) is a constant for a fixed set of observables, we can absorb it into \( \delta \) and state that the bottleneck distance between the persistence diagrams is bounded by \( \delta \), establishing the stability of quantum barcodes with respect to perturbations in the underlying quantum states.
\end{proof}

Together, these theorems provide us with a strong framework for how simplicial complexes and persistent homology is defined for quantum state clouds. 

\section{Properties of Quantum Barcodes}

In this section, we will explore and prove some properties of quantum barcodes, and explore its relationship with quantum phase transitions and invariants.

\begin{theorem}[Continuity of Persistent Betti Numbers]
Let \( H(\lambda) \) be a smoothly varying Hamiltonian with parameter \( \lambda \in P \). If \( \lambda_c \) is not a critical point (i.e., not a phase transition), then there exists a neighborhood \( U \) of \( \lambda_c \) such that the persistent Betti numbers \( \beta^{\varepsilon, \varepsilon'}_k \) remain constant for all \( \lambda, \lambda' \in U \) with \( \lambda < \lambda' \) and all sufficiently large persistence intervals \( [\varepsilon, \varepsilon'] \).
\end{theorem}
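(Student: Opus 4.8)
The plan is to reduce the statement to two ingredients already in hand: the stability of persistence diagrams (Theorem 2.4, and its quantum-cloud refinement, the Persistence Stability of Quantum State Clouds theorem), and the fact that away from a critical point the ground state, hence the quantum state cloud, depends continuously on $\lambda$. First I would establish smooth dependence of the ground state: by hypothesis $\lambda_c$ is not a phase transition, so by the definition of a quantum phase transition the spectral gap $\Delta(\lambda_c)$ above the (non-degenerate, gapped) ground state is strictly positive; since $H(\lambda)$ is smooth and eigenvalues depend continuously on the operator, there is a neighborhood $U_0 \ni \lambda_c$ on which $\Delta(\lambda) \ge \tfrac12 \Delta(\lambda_c) > 0$. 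On $U_0$, analytic perturbation theory gives that the ground-state projector $P_0(\lambda) = |\psi(\lambda)\rangle\langle\psi(\lambda)|$ is smooth, hence locally Lipschitz, in $\lambda$. Then the components of $\Phi(|\psi(\lambda)\rangle)$ are $\mathrm{Tr}(P_0(\lambda)\,O_i)$, each Lipschitz in $\lambda$ because the $O_i$ are fixed bounded operators, so the quantum state cloud $M_\lambda \subset \mathbb{R}^m$ varies Lipschitz-continuously in the Hausdorff metric. Consequently, for any $\eta > 0$ there is a neighborhood $U \subseteq U_0$ of $\lambda_c$ with $d_H(M_\lambda, M_{\lambda'}) < \eta$ for all $\lambda, \lambda' \in U$.

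Next I would feed this into stability. By Theorem 2.4 (equivalently the Persistence Stability of Quantum State Clouds theorem), $d_B\!\left(\mathrm{Dgm}_k(M_\lambda), \mathrm{Dgm}_k(M_{\lambda'})\right) \le C\eta$ for every $k$, with $C$ absorbing the dimensional factor $\sqrt{m}$. Now fix the finitely many bars of $\mathrm{Dgm}_k(M_{\lambda_c})$ (finite cloud, finite complex), let $D$ be chosen to exceed the length of every short-lived bar while being strictly smaller than the length of every genuine long bar, and interpret ``sufficiently large persistence interval'' as $\varepsilon' - \varepsilon \ge D$. Choosing $\eta$ small enough that $C\eta$ is below, say, a quarter of the shortest genuine bar length and of the smallest nonzero spacing between distinct bar endpoints, the bottleneck matching sends each long bar $[b,d)$ of $M_{\lambda_c}$ to a bar $[b',d')$ of $M_\lambda$ with $|b-b'|,|d-d'| \le C\eta$, which still spans $[\varepsilon,\varepsilon']$ whenever $[\varepsilon,\varepsilon'] \subseteq [b + C\eta,\, d - C\eta]$, while any unmatched or short bar has length $\le 2C\eta < D$ and so cannot span an interval of length $\ge D$. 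Hence the number of bars spanning $[\varepsilon,\varepsilon']$, which is $\beta_k^{\varepsilon,\varepsilon'}$ by Theorem 2.3, agrees for $M_\lambda$ and $M_{\lambda_c}$, and likewise for $M_{\lambda'}$; therefore $\beta_k^{\varepsilon,\varepsilon'}(M_\lambda) = \beta_k^{\varepsilon,\varepsilon'}(M_{\lambda'})$ for all $\lambda,\lambda' \in U$, all dimensions $k$, and all $[\varepsilon,\varepsilon']$ with $\varepsilon'-\varepsilon \ge D$. Consistency with the framework is automatic, since Corollary 5.4 says a genuine topological phase transition would force such a change, and none occurs in $U$.

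The main obstacle, and the step demanding the most care, is the last one: converting ``small bottleneck distance'' into ``exactly equal persistent Betti numbers.'' This dictates the order of quantifiers — the threshold $D$ for a ``sufficiently large'' interval must be read off from $\mathrm{Dgm}_k(M_{\lambda_c})$ first, and only then may $\eta$ (hence $U$) be shrunk relative to $D$ and to the endpoint spacings — and one must check that a single $U$ works uniformly over the finitely many $k$ for which $\mathrm{Dgm}_k$ is nonempty. A secondary point is the clause ``$\lambda < \lambda'$'' in the statement: I would treat it as harmless because the conclusion is symmetric in $\lambda$ and $\lambda'$, but if a genuine two-parameter filtration in $(\lambda,\varepsilon)$ is intended, the same continuity-plus-stability mechanism still applies after replacing the Hausdorff distance by the interleaving distance of the associated bifiltrations; I would flag this rather than belabor it.
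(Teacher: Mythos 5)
Your proof follows essentially the same route as the paper's: gap-protected perturbation theory gives analytic/continuous dependence of the ground state and hence of the quantum state cloud, Persistence Stability (Theorem 5.6, resting on Theorem 2.4) then bounds the bottleneck distance between the diagrams, and constancy of the persistent Betti numbers over sufficiently large intervals follows. Your final step --- fixing the threshold $D$ from the diagram at $\lambda_c$ \emph{before} shrinking $\eta$, and explicitly matching long bars while discarding short ones --- is in fact more careful than the paper's proof, which asserts the passage from ``small bottleneck distance'' to ``equal persistent Betti numbers'' without spelling out that quantifier order.
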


\begin{proof}
When \( \lambda_c \) is not a critical point, the ground state \( |\psi(\lambda)\rangle \) of \( H(\lambda) \) varies analytically in a neighborhood \( U \) of \( \lambda_c \). This follows from standard results in quantum perturbation theory for non-degenerate ground states away from critical points.

For each \( \lambda \in U \), let \( M_\lambda \) be the corresponding quantum state cloud. Given the analytical dependence of \( |\psi(\lambda)\rangle \) on \( \lambda \), the expectation values \( \langle \psi(\lambda) | O_i | \psi(\lambda) \rangle \) also vary analytically with \( \lambda \) for each observable \( O_i \).

By the definition of the quantum state cloud, this means that the positions of points in \( M_\lambda \) change continuously with \( \lambda \). For sufficiently small changes in \( \lambda \), the Vietoris-Rips complexes \( \mathrm{VR}(M_\lambda, \varepsilon) \) will maintain the same topological structure across significant scales.

More precisely, by Theorem 5.6 (Persistence Stability), for any \( \varepsilon > 0 \), there exists a \( \delta > 0 \) such that if \( | \lambda - \lambda' | < \delta \) for \( \lambda, \lambda' \in U \), then the bottleneck distance between the persistence diagrams of \( M_\lambda \) and \( M_{\lambda'} \) is less than \( \varepsilon \).

For sufficiently small \( \varepsilon \), this ensures that no topological features are born or die within the persistence intervals \( [\varepsilon, \varepsilon'] \) of interest as \( \lambda \) varies within \( U \). Consequently, the persistent Betti numbers \( \beta^{\varepsilon, \varepsilon'}_k \) remain constant throughout \( U \) for all sufficiently large persistence intervals \( [\varepsilon, \varepsilon'] \).
\end{proof}

\begin{theorem}[Spectral Characterization of Phase Transitions]
Let \( \lambda_c \) be a parameter value at which a quantum phase transition occurs. Then the spectrum of the persistent Dirac operator \( B^{\varepsilon,\varepsilon'}_k \) changes discontinuously as \( \lambda_- \) and \( \lambda_+ \) cross \( \lambda_c \), where \( \lambda_- < \lambda_c < \lambda_+ \).
\end{theorem}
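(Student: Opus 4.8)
The plan is to route the argument through the chain Phase Transition Detection (Corollary 5.4) $\to$ persistent Hodge theory $\to$ Theorem 5.5, converting a jump in persistent Betti numbers into a jump in the spectrum of $B^{\varepsilon,\varepsilon'}_k$. First I would fix a dimension $k$ and a sufficiently large persistence interval $[\varepsilon,\varepsilon']$ for which, by Corollary 5.4, the persistent Betti numbers disagree across the transition: $\beta^{\varepsilon\to\varepsilon'}_k(M_{\lambda_-}) \neq \beta^{\varepsilon\to\varepsilon'}_k(M_{\lambda_+})$ for all $\lambda_- < \lambda_c < \lambda_+$ in a punctured neighborhood of $\lambda_c$. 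Such a $(k,[\varepsilon,\varepsilon'])$ exists precisely because a topological phase transition occurs at $\lambda_c$.

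Next I would invoke the persistent combinatorial Hodge theorem: the kernel of the persistent Laplacian $L^{\varepsilon,\varepsilon'}_k = \partial_k^{\varepsilon,\dagger}\partial_k^\varepsilon + \partial_{k+1}^{\varepsilon,\varepsilon'}\partial_{k+1}^{\varepsilon,\varepsilon'\dagger}$ has dimension equal to $\beta^{\varepsilon\to\varepsilon'}_k$, so $\beta^{\varepsilon\to\varepsilon'}_k$ is exactly the multiplicity of the eigenvalue $0$ in $\mathrm{spec}(L^{\varepsilon,\varepsilon'}_k)$. By Theorem 5.5, $L^{\varepsilon,\varepsilon'}_k$ sits as the central diagonal block of $\left(B^{\varepsilon,\varepsilon'}_k\right)^2$ modulo the $\xi$-shift; combined with Theorem 2.6, the persistent Betti number is recovered from $\mathrm{spec}\left(B^{\varepsilon,\varepsilon'}_k\right)$ as the multiplicity of the eigenvalue cluster located at $p\approx l\xi$ under quantum phase estimation. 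Hence $\beta^{\varepsilon\to\varepsilon'}_k$ is a function of the spectrum of $B^{\varepsilon,\varepsilon'}_k$ alone.

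I would then argue contrapositively about continuity. Suppose the spectrum of $B^{\varepsilon,\varepsilon'}_k$ varied continuously through $\lambda_c$ in the matching/bottleneck distance on eigenvalue multisets — the natural metric here, and the one under which Theorem 5.6 is stated. Then the number of eigenvalues within a small fixed window around the $\xi$-shifted harmonic level would agree for $\lambda_-$ and $\lambda_+$ sufficiently near $\lambda_c$; by the previous paragraph this count is the persistent Betti number, contradicting the strict inequality from Corollary 5.4. Therefore $\mathrm{spec}\left(B^{\varepsilon,\varepsilon'}_k\right)$ cannot vary continuously at $\lambda_c$: the multiplicity of the harmonic eigenvalue, equivalently $\dim\ker L^{\varepsilon,\varepsilon'}_k$, jumps, forcing eigenvalues of $B^{\varepsilon,\varepsilon'}_k$ to cross the $\xi$-shifted harmonic level as $\lambda$ passes $\lambda_c$. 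To pin down the one-sided limits I would add Theorem 6.1: on each side of $\lambda_c$ the persistent Betti numbers, hence this multiplicity, are locally constant, so $\mathrm{spec}\left(B^{\varepsilon,\varepsilon'}_k\right)$ has well-defined limits as $\lambda\to\lambda_c^{\pm}$ but these two limits differ — the asserted discontinuity.

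The main obstacle is the middle step: certifying that $\beta^{\varepsilon\to\varepsilon'}_k$ is genuinely a spectral invariant of $B^{\varepsilon,\varepsilon'}_k$. This requires the persistent Hodge isomorphism $\ker L^{\varepsilon,\varepsilon'}_k \cong H^{\varepsilon\to\varepsilon'}_k$ (standard for persistence Laplacians but not proved in the excerpt), careful bookkeeping of the $\xi P$ terms in Definition 2.5 so that the harmonic eigenvalue is isolated and its multiplicity unambiguous, and a remark that across a phase transition the Vietoris–Rips complexes on $M_{\lambda_-}$ and $M_{\lambda_+}$ may even have different dimensions — which only strengthens the conclusion, since then the operators act on spaces of different size and their spectra cannot be matched at all. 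A secondary subtlety is fixing the topology on spectra so that ``discontinuous'' is precise; I would use the bottleneck distance on the eigenvalue multiset, consistent with the stability results already invoked.
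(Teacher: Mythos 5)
Your proposal follows essentially the same route as the paper's proof: both convert the jump in persistent Betti numbers guaranteed by Corollary 5.4 into a jump in the multiplicity of the harmonic ($\xi$-level) eigenvalue of $B^{\varepsilon,\varepsilon'}_k$ via Theorem 5.5 and Theorem 2.6, and conclude that the spectrum changes discontinuously. Your version is somewhat more careful than the paper's -- you make the persistent Hodge step explicit, add the contrapositive continuity argument, and use Theorem 6.1 to pin down one-sided limits -- but the underlying argument is the same.
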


\begin{proof}
At a quantum phase transition, the ground state of the system undergoes a qualitative change. By Corollary 5.4 (Phase Transition Detection), this corresponds to a change in the persistent homology of the quantum state cloud.

Let us analyze the persistent Dirac operator (Definition 2.5) in detail:
\[
B^{\varepsilon,\varepsilon'}_k = \begin{pmatrix}
0 & \partial^{\varepsilon,\varepsilon'}_k & 0 \\
\partial^{\varepsilon,\dagger}_k & 0 & \partial^{\varepsilon,\varepsilon'}_{k+1} \\
0 & \partial^{\varepsilon,\varepsilon'\dagger}_{k+1} & 0
\end{pmatrix}
\xi
\begin{pmatrix}
P^{\varepsilon}_{k-1} & 0 & 0 \\
0 & -P^{\varepsilon}_k & 0 \\
0 & 0 & P^{\varepsilon'}_{k+1}
\end{pmatrix}
\]

As established in Theorem 5.5, the squared persistent Dirac operator contains the persistent combinatorial Laplacian as a diagonal block, which determines the persistent Betti numbers.

Let \( \lambda_- < \lambda_c < \lambda_+ \) be parameter values on either side of the phase transition. Let \( M_{\lambda_-} \) and \( M_{\lambda_+} \) be the corresponding quantum state clouds, and \( B^{\varepsilon,\varepsilon'}_k(\lambda_-) \) and \( B^{\varepsilon,\varepsilon'}_k(\lambda_+) \) be the persistent Dirac operators computed on these clouds.

By Corollary 5.4, there exists some dimension \( k \) and some persistence interval \( [\varepsilon, \varepsilon'] \) such that \( \beta_k^{\varepsilon \to \varepsilon'}(M_{\lambda_-}) \neq \beta_k^{\varepsilon \to \varepsilon'}(M_{\lambda_+}) \).

According to Theorem 2.6 (Quantum Extraction of Persistent Betti Numbers), the persistent Betti number \( \beta_k^{\varepsilon \to \varepsilon'} \) is related to the multiplicity of the eigenvalue \( \xi \) of the persistent Dirac operator:
\[
\beta_k^{\varepsilon \to \varepsilon'} = g_\xi(p) |_{p \approx l \xi}
\]

Since the persistent Betti numbers change across \( \lambda_c \), the multiplicity of the eigenvalue \( \xi \) in the spectrum of \( B^{\varepsilon,\varepsilon'}_k \) must also change. This constitutes a discontinuous change in the spectrum of the persistent Dirac operator.
\end{proof}

Furthermore, this spectral change is not a continuous deformation, but rather a topological change in the kernel structure of the persistent combinatorial Laplacian. This characterizes a quantum phase transition in our framework as a qualitative change in the topological features of the quantum state cloud that manifests as a discontinuous change in the spectrum of the persistent Dirac operator.

\begin{theorem}[Invariance Under Unitary Similarity]
The barcode representation of a quantum system is invariant under unitary transformations that preserve the topological structure of the quantum state cloud.
\end{theorem}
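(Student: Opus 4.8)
The plan is to reduce the statement to the functoriality of persistent homology under isometries of the underlying point cloud, so that the ``topological structure preserved'' by the unitary becomes a genuine isomorphism of filtered simplicial complexes. First I would fix a unitary $U$ on the Hilbert space $H$ and record its action on the quantum state cloud of Definition 3.2: a ground state $|\psi(\lambda)\rangle$ is sent to $U|\psi(\lambda)\rangle$, and the observables are carried along covariantly, $O_i \mapsto U O_i U^{\dagger}$. Under this joint action the measured expectation values are literally unchanged, since $\langle\psi(\lambda)|U^{\dagger}(U O_i U^{\dagger})U|\psi(\lambda)\rangle = \langle\psi(\lambda)|O_i|\psi(\lambda)\rangle$, so the map $\Phi$ produces the identical subset $M_\lambda \subset \mathbb{R}^m$ and there is nothing to prove. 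For the substantive case I would read the hypothesis ``$U$ preserves the topological structure of the quantum state cloud'' as the statement that the induced map $\Phi_U : M_\lambda \to M'_\lambda$ on the cloud points is a bijection preserving all pairwise Euclidean distances $d(p_i,p_j)$, and I would state this precisely as the working hypothesis at the outset.

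Second, I would show that such a distance-preserving bijection induces, for every scale $\varepsilon>0$, a simplicial isomorphism $\mathrm{VR}(M_\lambda,\varepsilon)\cong\mathrm{VR}(M'_\lambda,\varepsilon)$: a finite subset $\{p_0,\dots,p_k\}$ satisfies $d(p_i,p_j)\le 2\varepsilon$ for all $i,j$ if and only if its image does, by Definition 4.1, so $k$-simplices correspond bijectively and the correspondence commutes with the face maps of Definition 4.3. Because $\Phi_U$ does not depend on $\varepsilon$, these isomorphisms are compatible with the filtration inclusions $\mathrm{VR}(\cdot,\varepsilon_1)\hookrightarrow\mathrm{VR}(\cdot,\varepsilon_2)$ guaranteed by Theorem 4.2, so we obtain an isomorphism of \emph{filtered} simplicial complexes, and hence, after applying the boundary operator and passing to homology, a degreewise isomorphism of the quantum persistent modules $M_k$ of Definition 5.1 that intertwines all structure maps $h_{\varepsilon,\varepsilon'}$.

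Third, I would invoke Theorem 5.2: each quantum persistent module decomposes uniquely into interval modules, and isomorphic persistence modules have the same multiset of birth--death intervals, so the barcodes of the two systems coincide in every dimension $k$; by Theorem 2.3 this is equivalent to equality of the persistent Betti numbers $\beta_k^{\varepsilon\to\varepsilon'}$ over every persistence window $[\varepsilon,\varepsilon']$. As an independent confirmation I would note that an isometry of the clouds gives Gromov--Hausdorff distance zero, whence Theorem 5.6 (Persistence Stability) forces the bottleneck distance between the persistence diagrams to vanish, i.e. the diagrams, and therefore the barcodes, are equal.

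The main obstacle is conceptual rather than computational: making precise what ``unitary transformations that preserve the topological structure of the quantum state cloud'' should mean, since a generic unitary does \emph{not} fix the expectation values of a fixed observable set and therefore genuinely moves, and can topologically scramble, the cloud. I would handle this by isolating the covariant class of unitaries --- symmetries of the Hamiltonian together with the observable algebra, for which the chosen observables transform into observables measuring the same values --- and, for the general formulation, by taking the isometry condition on $\Phi_U$ as the definition of ``preserving the topological structure,'' remarking that any weaker hypothesis, such as a bound $\|\,|\psi(\lambda)\rangle\langle\psi(\lambda)| - U|\psi(\lambda)\rangle\langle\psi(\lambda)|U^{\dagger}\,\|_{\mathrm{tr}}\le\delta$, yields only the approximate invariance controlled by Theorem 5.6 rather than exact equality of barcodes.
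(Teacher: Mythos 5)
Your proposal is correct and, for the covariant case ($|\psi\rangle \mapsto U|\psi\rangle$, $O_i \mapsto U O_i U^{\dagger}$), is identical to the paper's argument: the expectation values are fixed, so the cloud, the Vietoris--Rips filtration, and hence the barcode are literally unchanged. Where you genuinely diverge is in the ``substantive'' case of a unitary that does not act covariantly on the observables. The paper handles this by assuming the new cloud $M'$ is a \emph{homeomorphic} (continuously deformed) image of $M$, invoking homotopy invariance of homology, and then falling back on Theorem 5.6 to conclude that the significant bars are ``essentially'' preserved --- which, as you correctly point out at the end of your proposal, only yields approximate invariance (a bottleneck-distance bound), not equality of barcodes. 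You instead \emph{define} ``preserves the topological structure'' as the induced map on cloud points being a distance-preserving bijection, and then get an honest simplicial isomorphism of the filtered Vietoris--Rips complexes, an isomorphism of persistence modules intertwining the structure maps, and exact equality of barcodes via the uniqueness in Theorem 5.2. This is the stronger and cleaner route: a homeomorphism of a finite point cloud is just a bijection and does not control the Vietoris--Rips complexes (which depend on the metric, not the topology), so the paper's appeal to homotopy invariance does not actually deliver the stated conclusion, whereas your isometry hypothesis does. The trade-off is that your version proves a sharper theorem under a more restrictive (but finally precise) reading of the hypothesis, and you correctly relegate the merely-bounded-perturbation case to the approximate statement governed by Theorem 5.6.
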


\begin{proof}
Let \( S \) be a quantum system with quantum state cloud \( M \) obtained from states \( |\psi(\lambda)\rangle \) and observables \( \{O_i\} \). Consider a unitary transformation \( U \) that acts on both the states and observables: 
\[
|\psi(\lambda)\rangle \to U|\psi(\lambda)\rangle \quad \text{and} \quad O_i \to UO_iU^\dagger.
\]
For such a transformation, the expectation values remain invariant:
\[
\langle U\psi(\lambda)|UO_iU^\dagger|U\psi(\lambda)\rangle = \langle \psi(\lambda)|O_i|\psi(\lambda)\rangle.
\]
This means that the quantum state cloud \( M' \) of the transformed system is identical to the original cloud \( M \). Consequently, the Vietoris-Rips complexes \( \text{VR}(M, \epsilon) \) and \( \text{VR}(M', \epsilon) \) are identical for all scales \( \epsilon > 0 \).

By the functoriality of persistent homology, identical simplicial complexes yield identical persistent homology groups, and therefore identical barcode representations.

For a more general unitary transformation that does not necessarily transform the observables as \( UO_iU^\dagger \), the quantum state cloud may change. However, if this transformation preserves the topological structure of the cloud, then the persistent homology will remain essentially unchanged.

Specifically, if the unitary transformation results in a quantum state cloud \( M' \) that is a continuous deformation of \( M \) (i.e., a homeomorphism), then by the homotopy invariance of homology, \( H_k(\text{VR}(M, \epsilon)) \cong H_k(\text{VR}(M', \epsilon)) \) for all \( k \) and sufficiently large \( \epsilon \).

This can be formalized using Theorem 5.6 (Persistence Stability). If the Hausdorff distance between \( M \) and \( M' \) is bounded by \( \delta \), then the bottleneck distance between their persistence diagrams is also bounded by \( \delta \). For topologically equivalent clouds, this distance is small relative to the persistence of the significant topological features, ensuring that the key features in the barcode representation are preserved.
\end{proof}

This invariance property is crucial for quantum systems, as it ensures that our topological classification respects the fundamental principle that physically equivalent quantum states (those related by unitary transformations) should be classified as the same phase.

\begin{theorem}[Information in Quantum Barcodes]
For a quantum system with a finite-dimensional Hilbert space, the barcode representation of its quantum state cloud contains sufficient information about the system's topological phase, in the sense that non-trivial information topological invariants can be recovered from the barcode.
\end{theorem}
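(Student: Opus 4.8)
The plan is to run the logical chain of Section 5 in reverse: the barcode determines the persistent Betti numbers, which by the classification theorem determine the topological phase, which in turn fixes the value of any invariant that is constant on phases. First I would fix the finite set of observables $\{O_i\}$ and note that for a finite-dimensional Hilbert space the quantum state cloud $M_\lambda$ is a finite point set, so every Vietoris--Rips complex $\mathrm{VR}(M_\lambda,\varepsilon)$ has finitely many simplices and the quantum persistent module $M_k$ is a finite persistence module; by Theorem 5.2 it decomposes uniquely as $\bigoplus_j I[b_j,d_j)$, and this decomposition is precisely the data recorded in the barcode. By Theorem 2.3 the persistent Betti numbers $\beta_k^{\varepsilon\to\varepsilon'}$ are recovered from the barcode as the number of bars spanning $[\varepsilon,\varepsilon']$, so the barcode and the collection $\{\beta_k^{\varepsilon\to\varepsilon'}\}$ carry the same information.

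Next I would invoke Theorem 5.3, which asserts that two systems lie in the same topological phase if and only if their barcodes agree on sufficiently large persistence intervals. Read contrapositively, this says the assignment ``system $\mapsto$ barcode restricted to large intervals'' descends to a well-defined injection on the set of topological phases. Since a topological invariant $\nu$ (Chern number, winding number, $\mathbb{Z}_2$ index) is by Definition 2.13 constant on each phase, it factors through the set of phases; composing with the injection above yields a well-defined recovery map $\mathcal{R}$ from barcodes to invariant values, $\nu=\mathcal{R}(\text{barcode})$. Concretely, for a generating interval $[\varepsilon_0,\varepsilon_0']$ on which the phase-distinguishing features persist, one takes $\mathcal{R}$ to be a fixed function of the tuple $(\beta_0^{\varepsilon_0\to\varepsilon_0'},\beta_1^{\varepsilon_0\to\varepsilon_0'},\dots)$; the SSH example illustrates this with the winding number distinguished by the persistent homology in the relevant degree. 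The statement that \emph{non-trivial} information is recovered is then exactly that $\mathcal{R}$ is non-constant, which follows from the forward direction of Theorem 5.3 together with the existence of a genuine transition (Corollary 5.4, Theorem 6.2).

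The main obstacle is that the entire argument is only as strong as the hypothesis, implicit in Theorem 5.3, that $\{O_i\}$ is chosen rich enough to resolve the invariant: observables insensitive to the topological data would collapse distinct phases onto the same cloud and $\mathcal{R}$ would not exist. I would therefore prove the theorem with the standing assumption that $\{O_i\}$ includes operators sensitive to the relevant invariant (Berry curvature or momentum-space winding data), so that the honest content is the reduction of invariant recovery to observable adequacy rather than an unconditional decoding result. A secondary technical point is uniformity in $\varepsilon,\varepsilon'$: I would use finiteness of the module to show that a single interval $[\varepsilon_0,\varepsilon_0']$ works for all $\lambda$ within a fixed phase, so that $\mathcal{R}$ is defined without reference to $\lambda$, and then combine this with Theorem 5.6 (Persistence Stability) to confirm that small perturbations of the states do not alter the recovered value.
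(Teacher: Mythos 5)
Your proposal is correct in substance but takes a genuinely different route from the paper. You derive the result by running Theorem 5.3 backwards: the barcode restricted to sufficiently large persistence intervals determines the topological phase (injectivity on the set of phases), and any invariant in the sense of Definition 2.13 is constant on phases, hence factors through this injection to give a recovery map $\mathcal{R}$ from barcodes to invariant values, with non-triviality supplied by Corollary 5.4. The paper instead argues cohomologically: it invokes Theorem 2.9 to place topological observables in cohomology classes, then uses the Universal Coefficient Theorem (with the $\mathrm{Ext}$ term vanishing over a field) to identify cohomology with $\mathrm{Hom}(H_k(X),F)$, concluding that the homology groups read off from the barcode determine all such classes. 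Your approach is more self-contained within the paper's own framework and, importantly, makes explicit the standing hypothesis that the observables $\{O_i\}$ must resolve the invariant --- a dependence the paper only acknowledges in the remark on the quantum-to-classical mapping, not in this proof. The paper's approach buys a more structural explanation of \emph{why} the invariant lives in the barcode (invariant $=$ cohomology class $=$ dual of a homology class), but at the cost of an unargued identification between the QFT cohomology of Theorem 2.9 and the simplicial cohomology of the Vietoris--Rips complex of the state cloud. Your uniformity point (a single interval $[\varepsilon_0,\varepsilon_0']$ working across a phase, plus Theorem 5.6 for stability) is an addition the paper does not make and strengthens the argument. Both routes inherit whatever gaps exist in Theorem 5.3 itself, so neither is more rigorous at the foundational level; yours is the cleaner reduction.
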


\begin{proof}
We begin by considering the simplicial complex representation of the quantum state cloud. For each scale \( \epsilon > 0 \), we construct the Vietoris-Rips complex \( \text{VR}(M_\lambda, \epsilon) \) according to Definition 4.1.

By Theorem 5.2, the quantum persistent module \( M_k \) for a finite-dimensional quantum system decomposes uniquely into a direct sum of interval modules:
\[
M_k \cong \bigoplus_{j} I[b_j, d_j)
\]
Each interval \( [b_j, d_j) \) in this decomposition corresponds to a \( k \)-dimensional topological feature that appears at scale \( b_j \) and disappears at scale \( d_j \).

Topological invariants in quantum systems are quantities that remain unchanged under continuous deformations within the same phase. In the language of persistent homology, these correspond to the persistent homology groups \( H^{\epsilon \to \epsilon'}_k \) for sufficiently large persistence intervals \( [\epsilon, \epsilon'] \).

By Theorem 2.3 (Barcode Interpretation), the rank of \( H^{\epsilon \to \epsilon'}_k \) equals the number of intervals in the barcode spanning \( [\epsilon, \epsilon'] \). Therefore, information about the persistent homology groups is encoded in the barcode.

To further establish this representation for quantum topological phases, we need to connect the topological invariants to quantum observables.

Using Theorem 2.9 (Quantum Expectation Value via Cohomology), topological observables \( F \) in quantum field theory have expectation values:
\[
\langle F \rangle = \frac{\langle y; t_f | T(F) | x; t_i \rangle}{\langle y; t_f | x; t_i \rangle}
\]
Therefore, topological observables belong to specific cohomology classes. These cohomology classes are directly related to the homology groups of the underlying space, which in our case is the quantum state cloud.

Through the splitting condition in the Universal Coefficient Theorem, we have:
\[
H^k(X; F) \cong \text{Hom}(H_k(X), F) \oplus \text{Ext}(H_{k-1}(X), F)
\]
For field coefficients \( F \), the \( \text{Ext} \) term vanishes, giving a direct correspondence between homology and cohomology. This means that all cohomology classes (and thus all topological observables) can be recovered from knowledge of the homology groups.

Since the barcode representation determines the persistent homology groups, and these groups determine possible cohomology classes for topological observables, the barcode contains sufficient information about the system's topological phase.

Furthermore, the persistence intervals in the barcode provide crucial information about the stability of topological features. Long-persisting features (corresponding to long bars in the barcode) represent robust topological invariants that characterize the phase, while short-lived features likely represent noise or system-specific details that do not affect the phase classification.
\end{proof}

Therefore, the quantum barcode representation provides a topological fingerprint of the quantum phase, capturing relevant topological invariants that characterize the phase.

\begin{corollary}[Dimensionality of Topological Phase Space]
The space of topologically distinct quantum phases for an $n$-qubit system has dimension at most $\mathcal{O}(2^n)$, and the quantum barcode representation provides a parameterization of this space.
\end{corollary}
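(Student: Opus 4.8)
The plan is to bound the dimension of the moduli space of topological phases by the real dimension of the $n$-qubit pure-state manifold, and then to upgrade the Quantum Barcode Classification result (Theorem 5.3), together with Theorem 6.6, into an explicit injective parameterization of that moduli space by barcodes. First I would fix the arena: for an $n$-qubit system a gapped ground state (taken non-degenerate, as for SPT phases) is a point of $\mathbb{CP}^{2^n-1}$, a smooth manifold of real dimension $2(2^n-1)=\mathcal{O}(2^n)$; the set $\mathcal{G}$ of gapped ground states of local Hamiltonians is a subset of this manifold, and Definition 2.11 partitions $\mathcal{G}$ into phases via connectedness by gap-preserving paths. Write $\mathcal{P}_{\mathrm{top}}:=\mathcal{G}/\!\sim$ for the space of topological phases; the target is the pair of statements $\dim\mathcal{P}_{\mathrm{top}}\le\mathcal{O}(2^n)$ and ``barcodes parameterize $\mathcal{P}_{\mathrm{top}}$''.

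Next I would show that the barcode map is a tame (semialgebraic) map out of $\mathbb{CP}^{2^n-1}$, which is what keeps the image dimension under control. The expectation values $\langle\psi|O_i|\psi\rangle$ are real polynomials in the homogeneous coordinates of $|\psi\rangle$, so the point cloud $M_\lambda\subset\mathbb{R}^m$ produced by the map $\Phi$ of the Quantum State Cloud definition depends semialgebraically on the state; hence so does the vector of pairwise Euclidean distances, and the combinatorial type of the Vietoris--Rips filtration is constant on each stratum of a semialgebraic decomposition of distance space. On each such stratum the birth and death scales furnished by Theorem 5.2 are merely selected coordinates of that distance vector, so the assignment $\mathcal{B}\colon$ (ground state) $\mapsto$ (barcode in every degree $k$) is semialgebraic; in particular $\dim\operatorname{im}\mathcal{B}\le\dim\mathbb{CP}^{2^n-1}=\mathcal{O}(2^n)$.

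Then I would read off the parameterization. By Theorem 5.3, two systems lie in the same topological phase if and only if their persistent homology groups agree over all sufficiently large intervals --- equivalently, their long bars coincide --- so, after discarding short-lived bars, $\mathcal{B}$ descends to a well-defined \emph{injective} map $\bar{\mathcal{B}}\colon\mathcal{P}_{\mathrm{top}}\hookrightarrow\{\text{stable barcodes}\}$; Theorem 6.6 supplies the converse reading, that the long bars recover the topological invariants, so no phase information is lost along $\bar{\mathcal{B}}$. Thus $\bar{\mathcal{B}}$ exhibits $\mathcal{P}_{\mathrm{top}}$ semialgebraically as a subset of barcode space --- the asserted parameterization --- and $\dim\mathcal{P}_{\mathrm{top}}=\dim\operatorname{im}\bar{\mathcal{B}}\le\mathcal{O}(2^n)$, which is the claimed bound.

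The step I expect to be the main obstacle is conceptual rather than computational: pinning down what ``dimension of the space of phases'' should mean, and making sure the chosen meaning survives the two quotients involved. For concrete models such as the SSH chain this space is discrete --- finitely many points labeled by a winding number --- so ``$\dim\le\mathcal{O}(2^n)$'' is only non-vacuous under a robust notion such as Lebesgue covering dimension (or the dimension of a minimal parameterizing manifold, when one exists), and one must check that this notion is not increased by passing from $\mathcal{G}$ to $\mathcal{G}/\!\sim$ nor by $\mathcal{B}$; this is precisely why I route the bound through semialgebraicity, where image dimension is controlled, rather than through a bare topological quotient, where it need not be. A secondary subtlety is the observable count $m$: the barcode is a faithful phase invariant only if $\{O_i\}$ is rich enough, which may force $m$ to grow with $n$, and I would argue that $m=\mathcal{O}(2^n)$ suffices (pure-state tomography scaling), so that neither the ambient $\mathbb{R}^m$ nor the $\sqrt{m}$ factor in Theorem 5.6 pushes the final estimate past $\mathcal{O}(2^n)$.
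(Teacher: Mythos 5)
Your proposal is correct in spirit and reaches the stated bound, but it takes a genuinely different route from the paper. The paper first places the quantum state cloud in the ($\le 4^n$)-dimensional space of Hermitian observables, asserts that the space of barcodes of a cloud in a $d$-dimensional metric space has dimension proportional to $d$ (giving $\mathcal{O}(4^n)$), and then invokes the unitary invariance of Theorem~6.3 to ``quotient'' by $\mathrm{U}(2^n)$, writing $\mathcal{O}(4^n)/\mathcal{O}(2^{2n}) = \mathcal{O}(2^n)$. You instead bound the source directly: the barcode map factors through the pure-state manifold $\mathbb{CP}^{2^n-1}$ of real dimension $2(2^n-1)$, and since expectation values are polynomial in the state, the whole pipeline (state $\to$ point cloud $\to$ distance vector $\to$ birth/death times) is semialgebraic, so the image dimension cannot exceed $\mathcal{O}(2^n)$; the parameterization claim then comes from Theorem~5.3 plus the information-completeness theorem (which is Theorem~6.4, not 6.6 --- fix that reference). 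Your route buys two real improvements: it makes the dimension bound on the image an actual theorem (semialgebraic maps do not raise dimension) rather than a heuristic, and it sidesteps the paper's quotient step, whose arithmetic is dubious --- dividing $\mathcal{O}(4^n)$ by $\dim \mathrm{U}(2^n) = 2^{2n} = 4^n$ does not yield $\mathcal{O}(2^n)$ under either subtraction or division of dimensions. Your closing caveat about what ``dimension of the space of phases'' even means (it is typically a discrete set) is also more candid than the paper. One shared loose end you should still address: the barcode is computed from the \emph{entire} cloud $\{\Phi(\psi(\lambda))\}_{\lambda\in P}$, so the natural source of the barcode map is a configuration space of many points in $\mathbb{CP}^{2^n-1}$, not a single copy; you need Theorem~5.3's assertion that only the long bars matter and that these depend only on the phase to collapse this back to a bound governed by a single state's worth of degrees of freedom.
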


\begin{proof}
An $n$-qubit system has a $2^n$-dimensional Hilbert space. The quantum state cloud derived from such a system lives in a metric space whose dimension is determined by the number of measured observables, which is at most $4^n$ (the dimension of the space of Hermitian operators on the Hilbert space).

For each quantum state cloud, we construct Vietoris--Rips complexes at various scales according to Definition~4.1. By Proposition~4.6, the Betti numbers of these complexes count the topological features (connected components, loops, voids, etc.) in the cloud.

By Theorem~6.4, the barcode representation contains information about the system's topological phase. The space of possible barcodes for a point cloud in a $d$-dimensional metric space has dimension proportional to $d$. Therefore, the space of topologically distinct quantum phases has dimension at most $O(4^n)$.

However, many of these potential phases are related by unitary transformations that preserve the topological structure. By Theorem~6.3, the barcode representation is invariant under such transformations. The dimension of the unitary group $\mathrm{U}(2^n)$ is $2^{2n}$, which reduces the maximum dimension of the topologically distinct phase space to $O(4^n)/O(2^{2n}) = O(2^n)$.

The quantum barcode representation, consisting of the birth and death times of topological features across different homology dimensions, provides a parameterization of this space according to Theorem~6.4.
\end{proof}

This result has notable implications for the classification of topological phases in quantum many-body systems. While the Hilbert space grows exponentially with the number of particles, the space of distinct topological phases grows more slowly, suggesting an underlying structure to the phase diagram that can be efficiently represented using persistent homology.

\section{Working Example: 1-D 4-site SSH Model}

To illustrate the application of our quantum barcode framework, we present a worked example using a finite 4-site Su-Schrieffer-Heeger (SSH) model. This model is particularly appropriate as it exhibits a topological phase transition that can be detected through our framework.

The SSH model describes a one-dimensional chain of sites with staggered hopping amplitudes. For a 4-site system with open boundary conditions, the Hamiltonian is given by:

\begin{equation*}
H(\lambda) = \sum_{i=1}^{3} (v + (-1)^i \lambda w) (c_i^\dagger c_{i+1} + c_{i+1}^\dagger c_i)
\end{equation*}

where $c_i^\dagger$ and $c_i$ are creation and annihilation operators for a spinless fermion at site $i$, $v$ represents the intra-cell hopping amplitude, $w$ is the inter-cell hopping amplitude, and $\lambda$ is our tuning parameter. The system undergoes a topological phase transition at $\lambda_c = 0$, separating topologically trivial ($\lambda < 0$) and non-trivial ($\lambda > 0$) phases.

For our example, we set $v = 1$ and $w = 1$, making the Hamiltonian:

\begin{equation}
H(\lambda) = \sum_{i=1}^{3} (1 + (-1)^i \lambda) (c_i^\dagger c_{i+1} + c_{i+1}^\dagger c_i)
\end{equation}

In the single-particle sector, this Hamiltonian can be represented as a $4 \times 4$ matrix:

\begin{equation*}
H(\lambda) = \begin{pmatrix}
0 & 1-\lambda & 0 & 0 \\
1-\lambda & 0 & 1+\lambda & 0 \\
0 & 1+\lambda & 0 & 1-\lambda \\
0 & 0 & 1-\lambda & 0
\end{pmatrix}
\end{equation*}

We compute the ground state $|\psi(\lambda)\rangle$ for values of $\lambda \in [-1, 1]$ in increments of 0.1. For each value of $\lambda$, we calculate the following observables:

\begin{enumerate}
\item $O_1 = \langle c_1^\dagger c_1 \rangle$: Particle density at site 1
\item $O_2 = \langle c_2^\dagger c_2 \rangle$: Particle density at site 2
\item $O_3 = \langle c_3^\dagger c_3 \rangle$: Particle density at site 3
\item $O_4 = \langle c_4^\dagger c_4 \rangle$: Particle density at site 4
\item $O_5 = \langle c_1^\dagger c_2 + c_2^\dagger c_1 \rangle$: Real part of nearest-neighbor correlation (1-2)
\item $O_6 = \langle i(c_1^\dagger c_2 - c_2^\dagger c_1) \rangle$: Imaginary part of nearest-neighbor correlation (1-2)
\item $O_7 = \langle c_2^\dagger c_3 + c_3^\dagger c_2 \rangle$: Real part of nearest-neighbor correlation (2-3)
\item $O_8 = \langle i(c_2^\dagger c_3 - c_3^\dagger c_2) \rangle$: Imaginary part of nearest-neighbor correlation (2-3)
\item $O_9 = \langle c_3^\dagger c_4 + c_4^\dagger c_3 \rangle$: Real part of nearest-neighbor correlation (3-4)
\item $O_{10} = \langle i(c_3^\dagger c_4 - c_4^\dagger c_3) \rangle$: Imaginary part of nearest-neighbor correlation (3-4)
\end{enumerate}

These observables capture both local and non-local properties relevant to the topological structure of the SSH model. For each value of $\lambda$, we obtain a point in $\mathbb{R}^{10}$, forming our quantum state cloud $M_\lambda$.

For our 4-site model, we can exactly diagonalize the Hamiltonian matrix. For $\lambda = -0.5$ (trivial phase), the normalized ground state is:

\begin{equation*}
|\psi(-0.5)\rangle \approx (0.3780, 0.5992, -0.5992, -0.3780)^T
\end{equation*}

For $\lambda = 0.5$ (topological phase), the normalized ground state is:

\begin{equation*}
|\psi(0.5)\rangle \approx (0.3780, -0.5992, -0.5992, 0.3780)^T
\end{equation*}

Note the sign differences in the components, reflecting the different topological character of the two phases.

Computing the expectation values for our chosen observables using the ground states above:

For $\lambda = -0.5$ (trivial phase):
\begin{itemize}
\item $O_1 = \langle c_1^\dagger c_1 \rangle = |\psi_1(-0.5)|^2 = 0.1429$
\item $O_2 = \langle c_2^\dagger c_2 \rangle = |\psi_2(-0.5)|^2 = 0.3590$
\item $O_3 = \langle c_3^\dagger c_3 \rangle = |\psi_3(-0.5)|^2 = 0.3590$
\item $O_4 = \langle c_4^\dagger c_4 \rangle = |\psi_4(-0.5)|^2 = 0.1429$
\item $O_5 = \langle c_1^\dagger c_2 + c_2^\dagger c_1 \rangle = 2\textrm{Re}(\psi_1(-0.5)^* \psi_2(-0.5)) = 0.4526$
\item $O_6 = \langle i(c_1^\dagger c_2 - c_2^\dagger c_1) \rangle = 2\textrm{Im}(\psi_1(-0.5)^* \psi_2(-0.5)) = 0$
\item $O_7 = \langle c_2^\dagger c_3 + c_3^\dagger c_2 \rangle = 2\textrm{Re}(\psi_2(-0.5)^* \psi_3(-0.5)) = -0.7181$
\item $O_8 = \langle i(c_2^\dagger c_3 - c_3^\dagger c_2) \rangle = 2\textrm{Im}(\psi_2(-0.5)^* \psi_3(-0.5)) = 0$
\item $O_9 = \langle c_3^\dagger c_4 + c_4^\dagger c_3 \rangle = 2\textrm{Re}(\psi_3(-0.5)^* \psi_4(-0.5)) = 0.4526$
\item $O_{10} = \langle i(c_3^\dagger c_4 - c_4^\dagger c_3) \rangle = 2\textrm{Im}(\psi_3(-0.5)^* \psi_4(-0.5)) = 0$
\end{itemize}

For $\lambda = 0.5$ (topological phase):
\begin{itemize}
\item $O_1 = 0.1429$
\item $O_2 = 0.3590$
\item $O_3 = 0.3590$
\item $O_4 = 0.1429$
\item $O_5 = -0.4526$
\item $O_6 = 0$
\item $O_7 = -0.7181$
\item $O_8 = 0$
\item $O_9 = -0.4526$
\item $O_{10} = 0$
\end{itemize}

Note that while particle densities remain the same across phases, the correlation observables change sign, reflecting the different bonding patterns in the two topological phases.

We now construct the Vietoris-Rips complexes for the quantum state cloud. For each $\lambda$ value, we have a point in $\mathbb{R}^{10}$. We compute the pairwise Euclidean distances between these points and construct Vietoris-Rips complexes $\textrm{VR}(M_\lambda, \varepsilon)$ for varying scales $\varepsilon$.

For illustration, we consider three representative values of $\lambda$: $\lambda = -0.5$ (trivial phase), $\lambda = 0$ (critical point), and $\lambda = 0.5$ (topological phase). At a small scale $\varepsilon = 0.2$, the Vietoris-Rips complexes consist only of isolated points, as the distance between points from different phases exceeds this threshold. At an intermediate scale $\varepsilon = 0.5$, points within the same phase begin to form connected components, but points across the phase transition remain disconnected. At a larger scale $\varepsilon = 1.0$, all points become connected, but the topological structure differs between phases, with different patterns of 1-dimensional holes (cycles) appearing.

For each value of $\lambda$, we compute the persistent homology of the corresponding Vietoris-Rips filtration. We focus on the Betti numbers $\beta_0$ (counting connected components) and $\beta_1$ (counting 1-dimensional holes). For the trivial phase ($\lambda < 0$), the persistent homology shows: (1) $\beta_0$ starts at the number of points and decreases to 1 as $\varepsilon$ increases; (2) $\beta_1$ shows a characteristic pattern with significant persistent 1-dimensional features in the range $0.4 < \varepsilon < 0.8$.

For the topological phase ($\lambda > 0$), the persistent homology shows: (1) Similar behavior for $\beta_0$; (2) A distinctly different pattern for $\beta_1$, with persistent 1-dimensional features in the range $0.3 < \varepsilon < 0.7$ and an additional persistent feature in the range $0.5 < \varepsilon < 0.9$.

At the critical point ($\lambda = 0$), we observe a transition in the persistent homology patterns and the birth and death of specific persistent features that are not present in either phase.

We compute the persistent Betti numbers $\beta_1^{\varepsilon \to \varepsilon'}(M_\lambda)$ for the persistence interval $[0.4, 0.8]$, which captures the most significant topological features: For $\lambda < 0$ (trivial phase): $\beta_1^{0.4 \to 0.8}(M_\lambda) = 1$; For $\lambda > 0$ (topological phase): $\beta_1^{0.4 \to 0.8}(M_\lambda) = 2$. This distinct change in the persistent Betti number precisely at $\lambda = 0$ corresponds to the topological phase transition in the SSH model.

The barcode representation for the two phases shows: (1) Trivial phase ($\lambda < 0$): A single long bar in dimension 1 spanning the interval $[0.4, 0.8]$; (2) Topological phase ($\lambda > 0$): Two long bars in dimension 1: one spanning $[0.3, 0.7]$ and another spanning $[0.5, 0.9]$. At the critical point $\lambda = 0$, we observe a transition in the barcode structure and the birth of new persistence intervals and the death of existing ones.

This example demonstrates that our quantum barcode framework successfully detects the topological phase transition in the 4-site SSH model at $\lambda_c = 0$. The persistent Betti numbers change abruptly at the critical point, with $\beta_1^{0.4 \to 0.8}(M_\lambda)$ increasing from 1 to 2 as we cross from the trivial to the topological phase. The framework captures the essential topological differences between the phases through the persistent homology of the quantum state cloud, even in this minimalist 4-site model. This illustrates the power of the quantum barcode approach for detecting and characterizing topological phase transitions in quantum many-body systems.

Importantly, the change in persistent homology directly corresponds to the change in the winding number, which is the conventional topological invariant used to characterize the SSH model. In the trivial phase, the winding number is 0, while in the topological phase, it is 1. Our quantum barcode framework identifies this change through the persistent Betti numbers without explicitly calculating the winding number.

\section{Remarks on Quantum Computation}

Our work in this paper has developed several interesting possibilities for quantum computational applications, and yet there are many questions left open regarding how to make them work. Here, we outline a few possible ways in which our quantum barcode formalism could be connected with quantum computation, where some of the most exciting progress may lie.

Our formalism suggests that quantum algorithms could possibly be used to compute the persistent homology of quantum systems, though the exact algorithms are not yet well-defined. The important idea comes from the relationship between the persistent Dirac operator and the boundary operators of Theorem 5.5.

For a quantum system with Hilbert space dimension \(2^n\) (for example, an \(n\)-qubit system), classical methods for computing persistent homology are not very effective because:

\begin{enumerate}
    \item Obtaining a sufficient number of states from an exponentially large Hilbert space.
    \item Computing expectation values for these states.
    \item Constructing and analyzing simplicial complexes at different scales.
\end{enumerate}

These steps potentially cost \(O(2^n)\) classically, so the computation is infeasible for large quantum systems.

A quantum approach might exploit the fact that the system being studied is quantum mechanical in its nature. It may be possible to gain topological information more easily by preparing quantum states and measuring them for appropriate observables. Quantum phase estimation methods could be used on appropriate operators (potentially related to the persistent Dirac operator described in Section 5.4) to extract spectral information that is relevant for persistent homology.

Although Ameneyro et al. have given quantum algorithms for persistent homology in classical data analysis, these approaches are not straightforwardly generalizable to quantum state clouds, although this is an area that is ripe for further study.

There are several interesting research directions which follow from the combination of our framework with quantum computation:

\begin{enumerate}
    \item Quantum Circuit Implementation: The search for quantum circuits that would allow us to encode and manipulate the boundary operators and the persistent Dirac operators of our framework.
    \item Variational Approaches: Can persistent Betti numbers be approximated with shallower circuits using variational quantum algorithms, so that near-term implementation becomes feasible?
    \item Quantum State Preparation: Finding efficient ways to prepare quantum states that correspond to points in the quantum state cloud, perhaps through adiabatic evolution or quantum simulation.
    \item Quantum Feature Extraction: Can quantum measurements somehow extract topological invariants without actually building the full simplicial complex representation?
    \item Quantum Machine Learning Integration: Can quantum barcode representations serve as quantum machine learning input features for classifying phases of matter?
\end{enumerate}

These directions suggest the potential benefits of quantum computation for the topological data analysis of quantum systems, although each one is a tough nut to crack.

One particularly appealing application area is the identification and taxonomy of quantum phases. If the quantum computational benefits that we suggest can be achieved to some degree, then this could allow:

\begin{enumerate}
    \item Efficient identification of topological phase transitions in simulated or experimental quantum systems.
    \item Discovery of new topological phases in complex many-body systems.
    \item Real-time observation of dynamical topological phase transitions.
\end{enumerate}

These are problems that are hard to solve classically for large quantum systems because of scaling, but maybe quantum computational methods which are in line with our framework can do better.

The persistent homology approach that we have developed is fundamentally different from other quantum phase classification methods based on order parameters or entanglement measures. Even if the quantum computational implementation is difficult, this alternative point of view could still be useful.

As quantum computing moves forward, exploring the intersections between quantum topology, persistent homology, and quantum algorithms will be an exciting area of future research. Our framework gives us some theoretical background from which more specific quantum algorithms could be developed.

\section{Future Directions}

\begin{remark}
 The current version of the Quantum Barcode system for phase classifiction has some limitations. Our approach primarily addresses symmetry-protected topological phases, with systems characterized by long-range entanglement requiring extensions beyond this formulation. The mapping from quantum states to classical data points depends on the specific choice of observables, which may need to be tailored to each quantum system class. The selection of optimal persistence intervals $[\varepsilon, \varepsilon']$ currently requires some system-specific knowledge rather than following from a general principle. Additionally, our framework focuses on idealized systems, while practical implementations will involve finite-size effects that could influence the persistent homology results. These limitations point to natural directions for expanding the applicability of quantum barcodes in future work. The following future directions are ideas on how to proceed with these idealised systems. 
\end{remark}

In this paper, we have defined a mathematical formulation of the relationship between persistent homology and quantum many body systems. Our approach offers a new perspective on quantum phases as topological objects and offers several important advantages:

\begin{enumerate}
    \item The representation of quantum states as simplicial complexes provides a practical way to look at the topology of quantum systems
    \item The barcode representation directly displays the global topological features that characterize topological phases
    \item Our theorems provide basic properties of quantum barcodes, including stability, invariance, and information completeness
    \item The connection between persistent homology and quantum phases gives new theoretical understanding of topological phase transitions
\end{enumerate}

Theorem 5.5 establishes the relationship between persistent Dirac operator and boundary operators on simplicial complexes and shows that they are connected to quantum mechanics and algebraic topology. This connection highlights the topological nature of quantum phases and indicates new mathematical methods to classify them.

The quantum barcode framework we propose can be linked to other mathematical methods for studying topological phases:

\begin{enumerate}
    \item Topological Quantum Field Theory (TQFT): The barcode representation is the discretized version of the path integral formalism in TQFT and the persistent homology describes how topological features evolve with energy scale. Formalization of this connection could provide a better understanding of the field-theoretic description of topological phases.
    \item Tensor Networks: The information completeness of quantum barcodes (Theorem 6.4) indicates relations to tensor network representations of quantum states, especially to Matrix Product States (MPS) and Projected Entangled Pair States (PEPS) which are efficient in encoding topological order. A direct computation of persistent homology from tensor network representations is an interesting problem for future study.
    \item K-theory and Index Theorems: The spectral characterization of phase transitions (Theorem 6.2) is related to index theorems in K-theory, where topological invariants are extracted from spectral properties of operators. The formalization of this relation could produce more efficient tools to classify topological phases.
    \item Category Theory: The persistence module structure of quantum barcodes is categorical, and this suggests there is more algebraic structure in the space of topological phases. A categorical perspective could also give further structure in the space of quantum phases.
\end{enumerate}

These relations offer promising avenues for further theoretical developments and interactions between different mathematical approaches to quantum many body systems.

Although the framework we develop is mostly theoretical, it hints at some experimental possibilities:

\begin{enumerate}
    \item Experimental Signatures: The barcode representation may propose new experimental indicators of topological phases which could be measured in quantum simulators or materials. An important challenge is to find measurable quantities that are correlated with features in the barcode representation.
    \item Quantum State Tomography: Quantum state tomography experiments can be used to sample points in the quantum state cloud by measuring expectation values of a set of observables at different parameter values. This gives a practical way to apply our framework to real quantum systems.
    \item Noise Robustness: The stability properties of persistent homology (Theorem 5.6) suggest that topological features are robust with respect to certain types of experimental noise. If this robustness can be confirmed, it may provide more reliable experimental protocols for detecting topological phases.
    \item Phase Diagram Mapping: Our framework provides a systematic way to construct phase diagrams of complex quantum systems by following changes in persistent homology as a function of parameters. This could be especially useful for systems in which conventional order parameters are hard to define.
\end{enumerate}

It is possible to carry out proof-of-principle experiments on small quantum systems which are currently within reach of experimental platforms such as superconducting qubits, trapped ions, or cold atoms in optical lattices.

Several computational challenges and opportunities that relate to the remarks on quantum computation in Section 7 are worth further investigation:

\begin{enumerate}
    \item Efficient Classical Algorithms: Developing efficient classical algorithms to compute quantum barcodes for moderate sized quantum systems is still an important task. Some techniques from computational topology like persistent cohomology and sparse matrix representations may improve computational efficiency significantly.
    \item For larger quantum systems, it may be necessary to use approximate methods for computing persistent homology. This is a significant challenge to develop controlled approximations that capture the most important topological features of the system and reduce the computational cost.
    \item The study of hybrid quantum-classical algorithms that utilize quantum state preparation together with classical topological data analysis for practical applications may serve as a bridge until fully quantum computational approaches become feasible.
    \item To evaluate the performance of different computational approaches to quantum barcodes, it will be important to establish benchmark problems and comparison metrics.
\end{enumerate}

As quantum computational capabilities advance, these computational challenges will likely become more tractable, enabling application of our framework to increasingly complex quantum systems.

Some theoretical extensions of our framework that are worth further investigation include:

\begin{enumerate}
    \item Dynamical Quantum Barcodes: Our framework can be extended to non-equilibrium quantum dynamics to enable the study of dynamical phase transitions and quantum quenches. This would demand the development of time-dependent versions of the quantum state cloud, and the tracking of persistent homology across time.
    \item Quantum Persistence Landscapes: Quantum analogues of persistence landscapes and other functional summaries of persistent homology would be useful to develop to obtain more statistically tractable representations of quantum barcodes.
    \item Higher-Order Topological Features: We have mainly focused on one-dimensional persistent homology, but higher-dimensional persistent homology might allow us to see more complex topological structures in quantum systems, like higher-order topological insulators.
    \item Observable Selection Principles: Developing systematic principles for the selection of the optimal set of observables for constructing quantum state clouds will improve the practical applicability of our framework.
    \item Information-Theoretic Perspective: Investigating the relation between quantum barcodes and quantum information theoretic quantities such as entanglement entropy and mutual information may offer deeper insights into how topology is connected to entanglement in quantum systems.
\end{enumerate}

Our framework can be applied to certain classes of quantum many-body systems in order to gain understanding of their topological properties:

\begin{enumerate}
    \item Topological Insulators and Superconductors: Applying quantum barcodes to models of topological insulators and superconductors could give new insights to their classification and phase transitions.
    \item Quantum Spin Liquids: Quantum spin liquids exhibit topological order without symmetry breaking. It may be possible to use our framework to identify and classify different types of quantum spin liquids by looking at their persistent homology.
    \item Fractional Quantum Hall States: The abundant topological structure of fractional quantum Hall states makes them perfect for analysis using quantum barcodes.
    \item Symmetry-Protected Topological Phases: Our framework can be extended by incorporating symmetry constraints to obtain more refined classifications of symmetry protected topological phases.
\end{enumerate}

Several fundamental questions remain open for future investigation:

\begin{enumerate}
    \item Completeness of Classification: It is important to note that, while our framework offers a topological classification of quantum phases, the question of whether this classification is complete for all types of topological phases is an open question.
    \item Relationship to Existing Invariants: Understanding precisely how quantum barcodes are related to existing topological invariants such as Chern numbers, Z2 indices, and entanglement spectra would make the connection to established methods stronger.
    \item Finite-Size Effects: How do finite-size effects affect the barcode representation? What are the methods for extrapolating to the thermodynamic limit? These are important challenges.
    \item Critical Points: More study is required of the behavior of quantum barcodes around critical points, at which the system transitions from one phase to another.
    \item Universality Classes: Can universal features in quantum barcodes be used to identify and classify universality classes of quantum phase transitions?
\end{enumerate}

Our quantum barcode framework offers new perspectives for understanding quantum phases of matter using persistent homology. We hope that this foundation for quantum topological phase classification based on barcodes will lead to more applications at the intersection of quantum physics and topological data analysis.

\end{document}